\newcommand{\E}{\mathbb{E}}
\begin{document}

\title*{Emulating the First Principles of Matter: A Probabilistic Roadmap}

\author{Jianzhong Wu and Mengyang Gu}
\institute{
Jianzhong Wu \at Department of Chemical and Environmental Engineering, University of California, Riverside, CA 92521, USA \email{jwu@engr.ucr.edu}
\and 
Mengyang Gu \at Department of Statistics and Applied Probability, University of California, Santa Barbara, CA 93106, USA \email{mengyang@pstat.ucsb.edu}
}
%
%
\maketitle

\abstract{This chapter provides a tutorial overview of first principles methods to describe the properties of matter at the ground state or equilibrium. It begins with a brief introduction to quantum and statistical mechanics for predicting the electronic structure and diverse static properties of of many-particle systems useful for practical applications. Pedagogical examples are given to illustrate the basic concepts and simple applications of quantum Monte Carlo and density functional theory \textemdash  two representative methods commonly used in the literature of first principles modeling. In addition, this chapter highlights the practical needs for the integration of physics-based modeling and data-science approaches to reduce the computational cost and expand the scope of applicability. A special emphasis is placed on recent developments of statistical surrogate models to emulate first principles calculation from a probabilistic point of view. The probabilistic approach provides an internal assessment of the approximation accuracy of emulation that quantifies the uncertainty in predictions.   Various recent advances toward this direction establish a new marriage between Gaussian processes and first principles calculation, with physical properties, such as translational, rotational, and permutation symmetry, naturally encoded in new kernel functions. Finally, it concludes with some prospects on future advances in the field toward faster yet more accurate computation leveraging a synergetic combination {of} novel theoretical concepts and efficient numerical algorithms.}
 
 \keywords{ surrogate models, quantum and statistical mechanics, density functional theory, physical invariance}

\section{First principles modeling}
\label{sec:First}
Mathematically speaking, an essential task to predict the properties of matter from first principles is by solving the Schr\"{o}dinger equation. While the task is rather straightforward if one is concerned only with the properties of non-interacting particles (such as ideal Fermions or the lone electron in a hydrogen atom), the problem rapidly becomes much too complicated when the procedure is extended to non-ideal systems consisting of more than a single particle. By non-ideal we mean interactions between particles such as the Coulomb potential between charged species.  Not only is the dimensionality of the wave function linearly increasing with the number of particles in the system, but additional considerations must also be taken to account for correlation effects due to particle-particle interactions. The latter is responsible for the non-random spatial arrangement of particles in a many-body system which gives rise to the system symmetry and, for macroscopic systems, the rich phase behavior of matter in response to the changes of thermodynamic conditions. 

The fundamental principles to describe particle-particle interactions have been well established within the framework of quantum mechanics (QM). On the other hand, structure formation and phase transition in macroscopic systems are dictated by the fundamental laws of thermodynamics and can be described, at least in principle, by statistical mechanics (SM). From a practical perspective, the situation can thus be summarized, as famously stated many years ago by Paul M. Dirac \cite{RN22}, 
\begin{quotation} The underlying physical laws necessary for the mathematical theory of a large part of physics and the whole of chemistry are thus completely known, and the difficulty is only that the exact application of these laws leads to equations much too complicated to be soluble. It therefore becomes desirable that approximate practical methods of applying quantum mechanics should be developed, which can lead to an explanation of the main features of complex atomic systems without too much computation.\end{quotation}

Since the beginning of the last century, a perennial effort in the scientific community has been devoted to the development of analytical and numerical schemes to approximate the general procedures of QM/SM calculations such that they can be applied to materials and chemical systems to attain useful results that would satisfy the practical needs. Such efforts remain active today. The theoretical methods and their applications to diverse problems of practical interest constitute a major component of curriculum for a wide variety of disciplines in physical sciences and engineering. Numerous textbooks of QM and SM are readily available on both the fundamental principles and practical applications. Here we introduce only the essential mathematical procedures to describe the properties of many-body systems with a minimal exposure to the physical details. The emphasis is placed on a few theoretical approaches commonly used in the literature to predict the electronic structure and macroscopic properties of matter. To establish connections with problems of practical interests, we will elucidate how the electronic structure is related to various physical and chemical properties of chemical systems and materials.     
  
 \subsection{Quantum completeness}
\label{subsec:QC}        
The ultimate goal of first principles modeling is to predict the properties of matter based on its fundamental ingredients, \textit{i.e.}, electrons and nuclei as appeared in gas, liquid, solid or plasma \textemdash the four natural states of matter commonly observable in daily life. Electrons are elementary particles.  Each electron has a negative unit charge, $-1.602176634 \times 10^{-19}$ C. Nuclei are made of neutrons and protons. In chemical systems and materials, nuclei may be represented by point charges under most conditions. 
  
Quantum mechanics asserts that matter exits in discrete quantum states, \textit{i.e.}, a set of parameters to describe the ultimate details of the system. The properties of matter are manifested as the expectation of the collective behavior of the underlying particles in different quantum states. Once the quantum states are identified, we can in principle determine all properties of the system.  

To elucidate the essential mathematical procedure, consider in general a system containing $N_{e}$ electrons and $N_\alpha$ nuclei of type $\alpha$. At the macroscopic scale, all natural states of matter satisfy the condition of charge neutrality, \textit{i.e.}, the total electron charge is exactly balanced by that of the nuclei. Therefore, the condition of charge neutrality requires 
\begin{equation}
N_{e} - \sum_{\alpha} Z_\alpha N_\alpha = 0 
\end{equation}
where $Z_\alpha$ is a positive integer standing for the valency of the nuclear charge. This integer coincides with the atomic number for nuclear particle $\alpha$.     
 
At any moment, the electrons and nuclei may exist in a multitude of quantum states satisfying the Schr\"{o}dinger equation:   
\begin{equation}
\label{eq:SE}
\hat H \ket \Psi = E  \ket \Psi
\end{equation}
where $\ket {\Psi}$ represents a quantum state as specified by wave function $\Psi$, $E$ is is a scalar value representing the system energy, and $\hat H$ denotes the Hamiltonian of the system. The dynamic properties of the system can be described by the time-dependent Schr\"{o}dinger equation, which is not of concern in this work. 

In quantum mechanics, Hamiltonian is a mathematical operator defining the kinetic and potential energies of the system, \textit{i.e.}, the energies affiliated with the motions of individual particles and particle-particle interactions. For a system containing $N_{e}$ electrons and $N_\alpha$ nuclei of type $\alpha$, the Hamiltonian is given by
\begin{equation}
\label{eq:H}
\hat H =  - \sum_{i}^{N} \frac{\hbar ^2}{2{m_i}}\nabla _i^2  + \frac{e^2}{8\pi \varepsilon_0}
\sum_{i}^{N} \sum_{j \neq i}^{N} \frac{Z_{i} Z_{j}} {\left| {\bf{r}}_{i} - {\bf{r}}_{j} \right|}\
\end{equation}
where $\nabla_i$ is an Laplacian operator of the $i$th particle, $i=1,...,N$,  with $N \equiv \sum_{\alpha}N_{\alpha}+N_{e}$ being the total number of particles in the system, $m_i$ stands for the rest mass of particle $i$, ${\bf{r}}_{i}$ is the particle position, $\left| {\bf{r}}_{i} - {\bf{r}}_{j} \right|$ represents the Euclidean distance between particles $i$ and $j$, $\hbar=h/2\pi$ is the reduced Planck constant, $e$ is the unit charge, and ${\varepsilon _0}$ is the free-space permittivity. The first term on the right defines the total kinetic energy, which is affiliated with momenta of all particles in the system. The second term prescribes the potential energy, arising from the electrostatic interaction between electrons and nuclei. The electrostatic potential has an expression formally identical to that given by the Coulomb's law for classical particles. 

The wave function has the units of one over square root of volume. It can be represented in terms of the system configuration, \textit{i.e.}, a set of coordinates that define the positions and angular momenta of individual particles, ${\bf {x}}^{N} \equiv \{ {\bf{x}}_{1}, {\bf{x}}_{2},..., {\bf{x}}_{N} \}$. Here each vector ${\bf{x}}_{i} \equiv \{ {\bf{r}}_{i}, s_{i} \}$ specifies the position ${\bf{r}}_{i}$ and spin state ${\bf{s}}_{i}$ of particle $i$. The electron spin is affiliated with its intrinsic angular momentum as that for an elementary particle; it takes only two possible values that are conventionally denoted as $\ket \uparrow$ and $\ket \downarrow$, or simply spin up and down states. By contrast, the nuclear spin arises from its subatomic constituents, \textit{i.e.},  protons and neutrons. The nuclear spin is commonly treated as a single entity, which is invariant with the quantum states of the system. Therefore, we may describe the configuration of a system containing $N_{e}$ electrons and $N_{n}=\sum_\alpha N_\alpha$ nuclei using  $4N_{e} + 3N_{n}$ variables. As each variable represents one degree of freedom, the wave function $\Psi$ has the dimensionality of $4N_{e} + 3N_{n}$.      

Mathematically, Eq.\eqref {eq:SE} represents an eigenvalue problem. The energy levels and wave functions are related to the eigenvalues and eigenfunctions corresponding to operator $\hat H$. The eigenstates are also known as the pure states, whose wave functions satisfy the orthonormality condition  
\begin{equation}
\label{eq:ON}
\int {d{\bf{x}}^{N} } \Psi^{*}_{n} ({\bf {x}}^{N})\Psi_{m} ({\bf {x}}^{N})  = \left \{ \begin{array}{lr}
    1 &  n = m\\
    0 &  n \neq m
  \end{array}
\right.
\end{equation}
where superscript * represents complex conjugate, integers $n$ and $m$ are quantum numbers. At each pure state $n$, $\left| \Psi_{n} ({\bf {x}}^{N})\right|^2$ represents the probability density of the system in configuration ${\bf {x}}^{N}$. 

For an arbitrary quantum state, the wave function can be expressed as a supposition of pure states 
\begin{equation}
\Psi ({\bf {x}}^{N}) = \sum_{n} \alpha_{n} \Psi_{n} ({\bf {x}}^{N})  
\end{equation}
where subscript $n$ denotes an eigenstate, and 
\begin{equation}
\alpha_{n} = \int d{\bf {x}}^{N}  \Psi^{*}_{n}({\bf {x}}^{N}) \Psi({\bf {x}}^{N})
\end{equation}
A mixed quantum state is referred to as one that can be written as a linear combination of more than one pure states, \textit{i.e.}, $\alpha_{n} \neq 0$ for more than one pure states. 

For the system at a particular quantum state, any observable property can be evaluated from the multidimensional integrations 
\begin{equation}
\left<\hat{A}\right>_{\Psi}=  \frac{ \int d{\bf {x}}^{N} \Psi^{*}({\bf {x}}^{N}) \hat{A} \Psi({\bf {x}}^{N}) }{ \int d{\bf {x}}^{N} \Psi^{*}({\bf {x}}^{N}) \Psi({\bf {x}}^{N})} 
\end{equation}
where operator $\hat{A}$ denotes an observable quantity, and $\left< ... \right>_{\Psi}$ stands for quantum expectation, \textit{i.e.}, the expectation value of an observable property of the system in accordance with wave function ${\Psi}$. 

While the affiliation of particles with positions and spin states is intuitively appealing, one should keep in mind that, unlike classical particles, quantum particles are not allowed to have definite coordinates at any instance and thus, strictly speaking, cannot be \enquote{tagged} with specific positions and spin states. At any moment, quantum particles may assume positions corresponding to a superposition of all possible  pure states.

\begin{backgroundinformation}{A One-Particle Problem}

The one-particle problem is helpful to elucidate some basic concepts related to the Schr\"{o}dinger equation. If we consider a single particle in free space, the Schr\"{o}dinger equation would be reduced to  
\begin{equation}
\label{eq:1P}
-\frac{\hbar ^2}{2{m}}\nabla ^2 \Psi ({\bf {r}})  = E \Psi ({\bf{r}})
\end{equation}
The differential equation can be readily solved with the periodic boundary conditions (PBC) 
\begin{equation}
\label{eq:PB1}
\Psi ({\bf {r}})  = \Psi ({\bf {r+L}}) 
\end{equation}
where ${\bf {L}} \equiv (L,L,L)$, and $L>0$ represents the system size. The PBC may be understood as a division of free space into cubic boxes of side length $L$  such that each box contains an imaginary particle imaging the position of the real particle under consideration. We assume that the real and imaginary particles are assumed identical but do not interacting with each other. 

From Eqs.\eqref{eq:1P} and \eqref{eq:PB1}, we can easily find the wave function by using the Fourier transform:
\begin{equation}
\Psi ({\bf {r}})  = \frac{\exp(i \bf {r \cdot k})} {\sqrt {V}}
\end{equation}   
where $V=L^3$, and ${\bf {k}}$ is a 3-dimensional the vector given by
\begin{equation}
{\bf {k}} = \frac{2\pi} {L} {\bf {n}}
\end{equation}
with ${\bf {n}} = (n_x, n_y, n_z)$, $n_{x,y,z}=0, \pm1, \pm2,\pm 3, ....$ represents quantum numbers. 

It is straightforward to verify that the wave function satisfies the orthonormal conditions
\begin{equation}
\int d{\bf {r}} \left| \Psi_{\bf {k}} ({\bf {r}})\right|^2  = 1
\end{equation}  
and
\begin{equation}
\int d{\bf {r}} \Psi^{*}_{\bf {k}}({\bf {r}}) \Psi_{\bf {k'}}  ({\bf {r}})  = \delta_{\bf {k,k'}}
\end{equation}  
where $\delta_{\bf {k,k'}}$ denotes the Kronecker delta function, which is equal to 1 if $\bf {k=k'}$ and zero otherwise. At each quantum state, the particle is uniformly distributed inside the box, \textit{i.e.}, the probability density of finding a particle is everywhere uniform. 

At each quantum state, we can find the particle energy from the Schr\"{o}dinger equation:  
\begin{equation}
E = \frac{\hbar ^2 k^2}{2{m}} = \frac{h^2}{2{mL^2}} (n^2_x + n^2_y + n^2_z)
\end{equation}
At the ground state, the particle has a minimum energy of $E_0 = {h^2}/(2{mL^2})$, which has a degeneracy of 6 corresponding to all possible assignments of the quantum numbers leading to $n^2_x + n^2_y + n^2_z=1$. Unlike a classical particle, a quantum particle cannot have a zero energy as required by the uncertainty principle.   
 \end{backgroundinformation}

The Schr\"{o}dinger equation is applicable to systems with any number of particles, either finite or infinite. If the system is isolated from its surroundings, the total energy is fixed, and the number of quantum states corresponding to the particular energy is called degeneracy. In other words, degenerate quantum states have the same energy. The ground state is referred to as the state of a system when it has the minimum energy.  If the system allows to exchange energy with its surroundings (\textit{e.g.}, in contact with a thermal bath), the total energy fluctuates so that the system becomes accessible to different excited states. 

In statistical mechanics, the quantum states are also known as microstates. At each microstate, we know  the microscopic details of the system such as energy and particle positions. For a system with a given number of particles, volume and temperature, the probability of different microstates is determined by the Boltzmann distribution
\begin{equation}
p_{\nu} = \frac{\exp(-E_\nu/k_{B}T)}{Q}  
\end{equation}
where $\nu$ denotes a microstate, $k_{B}$ is the Boltzmann constant, $T$ is the absolute temperature, and $Q \equiv \sum_{\nu} \exp(-E_\nu/k_{B}T)$ is called the canonical partition function. Accordingly, the average energy of the system is given by
\begin{equation}
\left<E \right> \equiv  \sum_{\nu} p_{\nu} E_{\nu}
\end{equation}
where $\left< ... \right>$ stands for the ensemble average. From the partition function, we can derive in principle all thermodynamic properties \cite{RN25}.

Typically, a molecule contains no more than a few types of nuclei. A similar statement can be made for most materials.  However, most systems of practical concern contain a large number of particles. For a macroscopic system, the total number of particles, $N=N_{e} + \sum_{\alpha} N_\alpha$, is astronomically large ($\sim10^{23}$) and approaches infinity in the thermodynamic limit. Because the dimensionality of wave function scales linearly with the total number of particles, the Schr\"{o}dinger equation becomes \enquote{\textit {much too complicated to be soluble}} as the number of particle increases. For practical applications, the essential task is thus to develop \enquote{\textit {approximate methods of applying quantum and statistical mechanics}}.   

\subsection{Born-Oppenheimer approximation}
\label{subsec:BO} 
The Born-Oppenheimer (BO) approximation assumes that the electron degrees of freedom can be decoupled from those corresponding to the nuclei, and that the latter can be represented classical particles with negligible size. The assumption is justifiable because a nuclear particle occupies little volume inside each atom. Besides, the electron rest mass $m_{e}$ is much smaller than that of a proton $m_{p}$, the smallest nuclear particle ($m_{p}/m_{e} \approx 1836$). The huge difference in rest mass implies that the electron motion is faster than that of nuclei by several orders of magnitude. As a result, electrons may be considered to be in the ground state at any configuration of the nuclei. With each nucleus subject to a force owing to interaction with other nuclei and the inhomogeneous electron distributions, the nuclear motion follows the classical laws of physics that can be integrated with numerical procedures (\textit{viz.}, molecular dynamics  simulation). 

As the degree of freedom related to nuclear spins is irrelevant for most chemical systems, the configuration of nuclei can then be specified in terms of their positions, ${\bf {R}}_{N_n} \equiv ({\bf {R}}_1, {\bf {R}}_2,...,{\bf {R}}_{N_n})$. At a time scale sufficiently long for electron relaxation but short for the motion of nuclei, which is on the oder of a fraction of femtosecond ($10^{-15}$s), electrons are approximately in a stationary state subject to an external field arising from electrostatic interactions with the nuclei
\begin{equation}
\label{eq:eex}
v({\bf {r}}) =  -\frac {e^2} {4\pi \varepsilon_0} \sum^{N_{n}}_{I=1} \frac {Z_{I} } {\left| {\bf{R}}_{I} - \bf{r} \right|}
\end{equation}    
The ground state energy and the electronic structure can be determined by solving the Schr\"{o}dinger equation
 \begin{equation}
\left( - \sum_{i}^{N} \frac{\hbar ^2}{2{m_e}}\nabla _i^2  + \frac{e^2}{4\pi \varepsilon_0}
\sum^N_{i} \sum_{j > i}^{N} \frac{1} {\left| {\bf{r}}_{i} - {\bf{r}}_{j} \right|}\ + v({\bf {r}}) \right)  \Psi ({\bf{x}}^{N})= E  \Psi ({\bf{x}}^{N})
\end{equation}
For simplicity of notation, from now on we replace $N_{e}$ with $N$, ${\bf {x}}^{N} \equiv \{ {\bf{x}}_{1}, {\bf{x}}_{2},..., {\bf{x}}_{N} \}$ represents the electron configuration,$\Psi ({\bf{x}}^{N})$ is the electron wave function, and $E$ is the total electronic energy.
          
Once the electron wave function is determined from the Schr\"{o}dinger equation, the force on each nucleus due to the inhomogeneous distribution of electrons can be calculated from the Hellmann-Feynman (HF) equation
\begin{equation}
\label{eq:HF}
{\bf {F_{I}}} =  \frac {Z_{I}e^2} {4\pi \varepsilon_0} \int d{{\bf {r}}} \hat\rho({\bf {r}}) \frac {{\bf {r}}-{{\bf {R}}_{I}}} {\left| {\bf{R}}_{I} - {\bf{r}} \right|^{3}}
\end{equation}  
where $\hat\rho({\bf {r}})$ stands for the electron density. The latter is related to the wave function
\begin{equation}
\label{eq:rho0}
\hat\rho({\bf {r}}) = \sum^{N}_{i=1} \int d{\bf {x}}^{N} \left| \Psi ({\bf{x}}^{N}) \right|^{2} \delta ({\bf {r-r}}_{i})  
\end{equation}
where $\delta ({\bf {r-r}}_{i})$ is the Dirac delta function.The physical meaning of Eq.\eqref {eq:HF} is intuitive: the overall force on nucleus $I$ due to the electrons is simply equal to the integration of the local electron number density multiplied by the Coulomb force.           

\begin{backgroundinformation}{Hydrogen Atom}
\label{HAtom}
A normal hydrogen atom contains two particles, \textit {i.e.}, one electron and one proton. Following the BO approximation, we may consider a hydrogen atom as an electron orbiting around the proton with the electron wave function described by the one-particle Schr\"{o}dinger equation 
\begin{equation}
\label{eq:1H}
\left( -\frac{\hbar ^2}{2{m_{pe}}}\nabla ^2 -\frac {e^2} {4\pi \varepsilon_0 r} \right) \Psi ({\bf {r}})  = E \Psi ({\bf{r}})
\end{equation}
where $1/m_{pe} \equiv1/m_{e}+1/m_{p}, r={\left|\bf{r}\right|}$ is the radial distance. Here the proton is placed at the center of the coordinate system. For a single electron, the intrinsic magnetic momentum and thus the spin number play no role in determining the electronic properties of the system. 

With the boundary conditions $\Psi(r)=\Psi{'}(r)=0$ as $r \rightarrow \infty$, Eqs.\eqref{eq:1H} yields an analytical solution. In spherical coordinates, the wave function is given by
\begin{equation}
\label{eq:1Hw}
\Psi (r, \theta,\phi)  = \mathcal{N}_{n,l} x_{n}^{l} e^{-x_{n}/2} \mathcal{L}_{n,l} (x_{n}) Y^{m}_{l}(\theta, \phi), 
\end{equation}
and the corresponding energy is
\begin{equation}
\label{eq:1He}
E = - \frac{m_{pe}e^{4}}{32\pi^{2}\epsilon^{2}_{0}\hbar^{2}n^{2} }. 
\end{equation}
In atomic physics, $(n,l,m)$ are known as principal, azimuthal, and magnetic quantum numbers, respectively. These quantum numbers, take the integer values of $n=1,2,3,...$, $l=0,1,2,...,n-1$, and $m=-l,...,l$, and define the atomic orbitals that are commonly used as the basis functions for the wave functions of other atoms and molecular systems. 

In Eqs.\eqref{eq:1Hw} and \eqref{eq:1He}, $\mathcal{N}_{n,l}$ is a normalization constant for the radial component of the wave function
\begin{equation}
\mathcal{N}_{n,l} = \sqrt{ \left(\frac {2m_{pe}}{a_{0}m_{e}} \right)^3 \frac {(n-l-1)!}{2n[n(n+l)!]^{3}}}
\end{equation}
where $a_{0} =4 \pi \epsilon_{0} \hbar^{2} / (m_{e} e^{2})$ is known as the Bohr radius. The universal constant,  $a_{0} =5.2917721... \times 10^{-11} $m is often used as the unit length. $\mathcal{L}_{n,l} (x)$ stands for an associated Laguerre polynomial of degree $(n-l-1)$ and order $(2l+1)$, $x_{n}=2rm_{pe}/(na_{0}m_{e})$ is the dimensionless radial distance, and $Y^{m}_{l}(\theta, \phi)$ is a spherical harmonic function of degree $l$ and order $m$. 

A stable hydrogen atom exists in the ground state. In this case, the quantum numbers are $n=1, m=0$, and $l=0$, and the minimum energy is 
\begin{equation}
\label{eq:E0}
E_{0} = - \frac{m_{pe} E_{Ryd} } {m_{e}} \approx -E_{Ryd} 
\end{equation}
where $E_{Ryd} = {e^{2}} /({8\pi \epsilon_{0} a_{0}})$ is known as the Rydberg energy. The Rydberg energy, $2.17872... \times 10^{-18} $ J, is a universal constant that is often used as a unit energy in atomic physics.\footnote{An alternative energy unit is hartree, 1 hartree = 2 rydberg = 27.211 eV }. Intuitively, Eq. \eqref{eq:E0} may be understood as the electrostatic energy between the electron and the proton at an average distance twice the Bohr radius.  

At the ground state, the wave function for a hydrogen atom  is given by 
\begin{equation}
\Psi (r) =  \frac {1}{\sqrt{(\pi a^{3}_{0})}} e^{-r/a_{0}}  
\end{equation} 
Correspondingly, the electron density is 
\begin{equation}
\rho(r) = \frac {e^{-2r/a_{0}}}{\pi a^{3}_{0}}   
\end{equation} 
The spherically symmetric function decays exponentially and has a maximum value of $1/(\pi a_{0}^{3} )$ at the nucleus (at $r = 0$).   

The ground-state energy, $E_{0} \approx$-13.598 eV, represents the energy change when an electron and a proton bind to form a stable hydrogen atom. This energy corresponds to the negative of the hydrogen ionization energy. The changes among different energy levels of the hydrogen atom explain its light emission spectrum, which represents a major triumph in the early development of quantum mechanics.     
 \end{backgroundinformation}
 
As illustrated in Box \ref{HAtom}, one of the simplest examples for the application of the BO approximation is provided by the first principle predictions for the spectrum and ionization energy of atomic hydrogen. In principle, a similar approach can be applied to polyatomic molecules by representing the molecular energy in terms of the electronic contribution plus those related to nuclear motions within the molecule, such as bond stretching, bond vibration, and molecular rotations. The BO model provides a theoretical basis to predict molecular spectroscopy and the thermodynamic properties of ideal gas systems. For a hydrogen atom, we fix the nuclear position which is treated as the center of coordinates for solving the Schr\"{o}dinger equation. When a system contains multiple nuclei, the electron distribution is in general anisotropic, leading to an atomic force on each nucleus responsible for the molecular configurations as well as atomic motions including chemical reactions. If the nuclei are treated as classical particles, we may describe the motions of nuclei using Newton's equations. The combination of quantum mechanics for the electronic structure calculations and classical physics for the nuclear motions constitutes the essential ideas of the Born-Oppenheimer molecular dynamics (BOMD) simulation. 

With the nuclei treated as classical particles, the BO approximation greatly simplifies the computational task to predict the properties of matter from first principles. Not only does the BO approximation reduce the dimensionally of the wave function, it also essentially transforms the complex quantum-mechanic problem to one that is only concerned with electronic structure calculations. Whereas the electronic wave function remains a multidimensional quantity, it represents the property of only a single component system. In particular, the electron density can be fully determined from the one-body external potential, a three-dimensional function that depends only on the nuclear positions (see Eq.\eqref {eq:eex}).     
 
\subsection{Quantum Monte Carlo simulation}
\label{subsec:QMC}  
Monte Carlo methods for solving the many-body Schr\"{o}dinger equation were suggested first by Metropolis and Ulam in 1949 \cite{RN30}. However, major breakthroughs were made not until the publication of a landmark work by Ceperley and Alder in 1980 \cite{RN27}. Today quantum Monte Carlo (QMC) simulation represents properly the most generic way to accurately predict electronic properties \cite{RN35, RN37}. 

The central idea of Monte Carlo methods is to generate a large number of samples using a stochastic process. It converts multidimensional operations in terms of simple mean-value evaluations \cite{RN22}. The statistical approach finds broad applications in various branches of mathematics for solving high-dimensional optimization problems and integro-differential equations. The development of the Metroplis (a.ka., the $MR^{2}T^{2}$ algorithm) marks a milestone for the broad use Monte Carlo methods in physical sciences. As stated befittingly in the introductory sentence of their famous paper \cite{RN30}, Monte Carlo methods are \begin{quotation} suitable for fast electronic computing machines, of calculating the properties of any substance ....\end{quotation}

The variational quantum Monte Carlo (VMC) represents one of the simplest ways to evaluate many-body electronic wave function by using Monte Carlo simulation. The basic idea is that the ground state energy satisfies the variational principle
\begin{equation}
\label{eq:VP}
E_{v} = \frac{ \int d{\bf {x}}^{N} \Psi^{*}({\bf {x}}^{N}) \hat{H} \Psi({\bf {x}}^{N}) }{ \int d{\bf {x}}^{N} \Psi^{*}({\bf {x}}^{N}) \Psi({\bf {x}}^{N})}  \geq E_{0} 
\end{equation}
where  $\Psi({\bf {x}}^{N})$ stands for the wave function of the system in an arbitrary quantum state. The inequality is rather intuitive because, by definition,  electrons in an arbitrarily quantum state must have an energy no less than the ground-state value. While the mathematic proof is also elementary, evaluation of the energy entails multidimensional integrations that cannot be performed with conventional numerical methods. 

In VMC, the multidimensional integration for the system energy is expressed in terms of an expectation value
\begin{equation}
E_{v} = \int d{\bf {x}}^{N} p({\bf {x}}^{N})E({\bf {x}}^{N})
\end{equation}    
where $E({\bf {x}}^{N})$ represents a local energy density  
\begin{equation}
E({\bf {x}}^{N}) \equiv \Psi^{-1}({\bf {x}}^{N}) \hat{H} \Psi({\bf {x}}^{N})  
\end{equation}
and $p({\bf {x}}^{N})$ is the probability density of the system in configuration $\bf {x}^{N}$
\begin{equation}
p({\bf {x}}^{N}) = \frac{ \left| \Psi({\bf {x}}^{N})\right|^2 }{ \int d{\bf {x}}'^{N} \left| \Psi({\bf {x}}'^{N})\right|^2} 
\end{equation}

The Metropolis algorithm provides a convenient way to sample the configurational space with probability $p({\bf {x}}^{N})$. The probability of acceptance for transition from configuration ${\bf {x}}_{n}^{N}$ to ${\bf {x}}_{o}^{N}$ is given by   
\begin{equation}
\label{eq:MA}
acc( {{\bf {x}}_{n}^{N} | {\bf {x}}_{o}^{N} }) = min \bigg\{ 1, \frac { \tau( {\bf {x}}_{o}^{N} | {\bf {x}}_{n}^{N} ) |\Psi({\bf {x}}_{n}^{N})|^{2} } { \tau( {\bf {x}}_{n}^{N} | {\bf {x}}_{o}^{N}) | \Psi({\bf {x}}_{o}^{N})|^{2}) } \bigg\} 
\end{equation}
where $\tau( {\bf {x}}_{o}^{N} | {\bf {x}}_{n}^{N})$ represents the probability of a trial move from configuration ${\bf {x}}_{n}^{N}$ to ${\bf {x}}_{o}^{N}$. A simple procedure to accomplish the Monte Carlo move is by a radon displacement of the electron configuration
\begin{equation}
 {{\bf {x}}_{n}^{N}} = {{\bf {x}}_{o}^{N} +  \pmb{\xi}^{N}} \Delta
\end{equation}
where $\pmb{\xi}^N$ a $3N$-dimensional vector of uniformly distributed random numbers between $-1$ and $1$, and $\Delta>0$ represents the step length. Typically, the step length is selected such that about $50\%$ of the trial moves are accepted. 

Starting with a suitable electronic structure, the Metropolis algorithm generates new configurations that will converge to $p({\bf {x}}^{N})$ after a sufficiently large number of Monte Carlo moves. As a result, the variational energy can be obtained by averaging over these ``sampled'' configurations
\begin{equation}
\label{eq:MC}
E_{v} \approx \frac{1}{M} \sum_{i=1}^{M} E({\bf {x}}_{o}^{N}) 
\end{equation}
where $M$ denotes the number of samples. In stark contrast to Eq.\eqref{eq:VP}, Eq.\eqref{eq:MC} involves no high-dimensional integration. Because the summation is independent of the dimensionality of the wave function, the Monte Carlo method thus drastically reduces the computational cost for evaluation of the variational energy. In the statistics literature, the VMC is also known as the Metropolis algorithm, which is widely used for sampling from the posterior distribution in Markov Chain Monte Carlo methods for Bayesian inference.
 
To minimize the variational energy, one may express the wave function in the so-called Jastrow-Slater form
\begin{equation}
\Psi({\bf {x}}^{N}) = e^{J({\bf {x}}^{N})} \Phi({\bf {x}}^{N})
\end{equation}    
where $J({\bf {x}}^{N})$ is known as the Jastrow factor, and $\Phi({\bf {x}}^{N})$ is a Slater determinant (or a linear combination of Slater determinants). The Jastrow factor accounts for the electron-electron and electron-nuclear correlations that neglected in $\Phi({\bf {x}}^{N})$. The correlation effects are typically written in terms of semi-empirical functions of the particle-particle distances with the parameters obtained by minimization of the ground-state energy. The Slater determinant can be obtained from the Hartree-Fock-like low-level QM calculations.

\begin{backgroundinformation}{Uniform Electron Gas}

One primordial example for applications of QMC is to study the equilibrium properties of uniform electron gas at either the ground state 0 K or at finite temperatures. Historically, the simulation results have played an instrumental role for the formulation of the local density approximation (LDA) (see Section~\ref{subsec:DFT}). From a theoretical perspective, the properties of uniform electrons also provide a useful reference for understanding inhomogeneous electronic systems and benchmark data for theoretical developments of new DFT functionals.

Figure ~\ref{fig:RDF} presents the spin-resolved radial distribution functions (RDF) for several uniform electron gases at 0 K, Here the results calculated from VMC are compared with those from a theoretical method \cite{RN36}. Similar to its classical counterpart, RDF describes the normalized local density of electrons, $g(\bf {r})=\rho(\bf {r})/\rho_{0}$, given another electron is found at the origin. For a uniform system of isotropic particles, RDF is a function of both the bulk density and the radial distance $r=|\bf {r}|$. Because of electrostatic interactions and the Pauli exclusion principle, the RDF of a uniform electron gas also depends on the spin state as well as the bulk electron density $\rho_{0}$. In Figure ~\ref{fig:RDF}, the bulk density is expressed in terms of the reduced Wigner-Seitz radius  
\begin{equation}
\label{eq:wsr}
r_{s} = \left( \frac {3} {4 \pi \rho_{0} a^{3}_{0}} \right)^{1/3}   
\end{equation} 
where $a_{0} =5.2917721... \times 10^{-11} $m is the Bohr radius.  

\begin{figure}
\centering
\includegraphics[width=120mm]{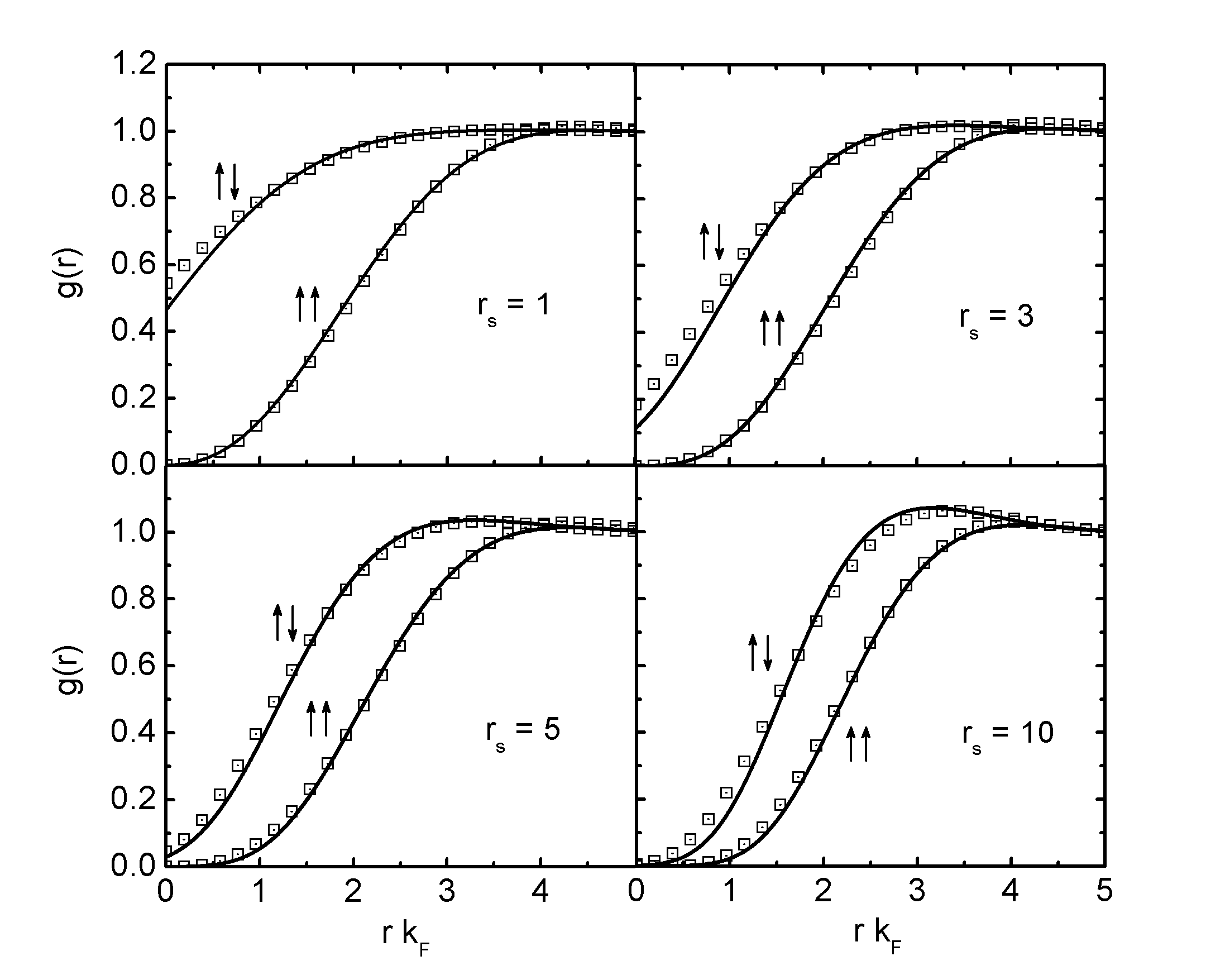}
\caption{Radial distribution functions for different electron pairs (spin up and down) in a uniform paramagnetic system at T = 0 K and different electronic densities. Here $r_{s}$ is the reduced Wigner-Seitz radius, and $k_{F}= (6\pi \rho_{0})^{1/3}$. Solid lines are from an analytical theory and symbols are from variational Monte Carlo simulation. Reproduced from  \cite{RN36}. }
\label{fig:RDF}
\end{figure}

Despite the divergence of the Coulomb potential at $r=0$, the RDF for electrons of opposite spins remains a finite value at the origin, manifesting the wave nature of electrons. The contact value falls as the reduced Wigner-Seitz radius increases from $r_{s}=1$ to 10, and approaching zero as it further increases. The density dependence suggests that the contact value of RDF arises from the electrostatic correlation, which leads to an effective attraction among electrons. At low density (\textit{e.g.}, $r_{s}=10$), the RDF exhibits the Friedel oscillation that reflects the interplay of electric repulsion and charge screening.  The same-spin electrons experience the Pauli exclusion principle thus the RDF shows a stronger depletion at short distance. Because no two electrons can be in the same quantum state, the contact value of RDF for electrons of the same spins is exactly zero. Approximately, the difference between the RDFs of the same and opposite spins reflects the so-called exchange effects.
 \end{backgroundinformation}
 
VMC represents one of many quantum Monte Carlo (QMC) simulation methods. Other popular QMC algorithms include diffusion Monte Carlo (DMC), path integral Monte Carlo (PIMC), and more recently, full configuration interaction quantum Monte Carlo (FCIQMC) \cite{RN28}. In DMC, the ground-state wave function is obtained from the stationary solution of the time-dependent Schr\"{o}dinger equation. Mathematically, the latter is equivalent to the classical diffusion equation in imaginary time, which can be represented in term of a stochastic process (a.k.a., a random walk process). Interestingly, the idea of DMC was discussed in the seminal article by Metropolis and Ulam \cite{RN30}. DMC can be used to calculate the properties of transition metal compounds, electrons at excited states, and weak intermolecular interactions. In general, it is more accurate than VMC but is also computationally much more demanding. To a certain degree, PIMC is similar to DMC but it utilizes Monte Carlo methods to sample the ``diffusion'' paths. PIMC is commonly used to study the properties of many-particles systems at finite temperature such as superfluids and plasmas. On the other hand, FCIQMC directly samples the Slater determinant with Monte Carlo methods. It is applicable to a variety of chemical systems and solids but, at present, is most suitable for relatively small systems because of the high computational cost.  

\subsection{Density functional theory}
\label{subsec:DFT} 
Since the original concepts were introduced in the mid-1960s by Pierre Hohenberg, Walter Kohn and Lu Jeu Sham \cite{RN40, RN42}, density functional theory (DFT) has evolved into one the most widely used computational tools in condensed matter physics, chemistry, materials science, and more recently, biology as well as engineering. As an alternative to conventional many-body wave function methods, DFT is drastically more efficient from a computational perspective and has been used to predict the properties of matter virtually of all kinds as reported in over ten thousand publications every year. Despite its great popularity, DFT remains one of the most misunderstood theoretical methods, not necessarily in the sense that its usefulness is questioned or that its predictions are incomprehensible due to its intrinsic connection with quantum mechanics --- which has always been mysterious, but in the sense that its foundation, limitations, and the scopes of applications or misapplications have been routinely messed up even by well-respected experts in its own field. To a certain degree, the situation is well summarized by Sean Carroll, a theoretical physicist at the California Institute of Technology, who remarked in an Op-Ed essay from New York Times  \cite{RN43}: \begin{quotation} What's surprising is that physicists seem to be O.K. with not understanding the most important theory they have.\end{quotation}
DFT had been an obscure theory and very much ignored by the scientific community for decades before it reaches today's glory. In one of his last publications \cite{RN44}, Walter Kohn wrote on the occasion celebrating fifty years of DFT: \begin{quotation}  As many theoretical chemists can confirm from personal experience, Density Functional Theory (DFT), for several decades after the publication of the Hohenberg-Kohn theorem (in 1964) was unfavorably received by many leading traditional quantum theorists of electronic structure, including John Pople.\end{quotation}
As well-known, John Pople and Walter Kohn were colleagues at the same institute for a number of years and shared the chemistry Nobel prize in 1998!  

\begin{backgroundinformation}{Basics of Statistical Mechanics}
\label{BSM}
Before discussing the generic ideas of DFT, it is instructive to recall a few basic concepts from statistical mechanics. Consider a many-body system with volume $V$, temperature $T$, and a one-body potential for each type of particles $v_{\alpha} ({\bf{r}})$. At equilibrium, the microstates constitute a grand canonical ensemble, which encompasses all quantum states of the system as described by particles in different configurations. The equilibrium properties of the system can be expressed in terms of various forms of ensemble averages  \cite{RN25}.

The one-body potential $v_{\alpha} ({\bf{r}})$ is referred as a point energy applied to each particle ${\alpha}$. This function is invariant with the system configuration, \textit{i.e.}, it is independent of the microstates of the system. For example, if we consider a uniform electron gas, the one-body potential corresponds to the negative of the electron chemical potential, $\mu$, which is a constant defined by the system temperature and the bulk electron density. For an inhomogeneous electronic system as we have discussed in Section~\ref{subsec:BO}, the one-body potential is given by that corresponding to a uniform electronic system plus the Coulomb energy due to the electron interaction with nuclei (see Eq.\eqref {eq:eex}). 

The one-body particle density is defined as an ensemble average of the number density of particle ${\alpha}$ at different microstates 
\begin{equation}
\label{eq:rho}
\rho_{\alpha}({\bf {r}}) = \left< \hat \rho_{\alpha}({\bf {r}})\right> = \sum_{i_{\alpha}} \left<\delta ({\bf {r-r}}_{i_{\alpha}}) \right> 
\end{equation}
where $\hat \rho_{\alpha}({\bf {r}})$ stands for an instantaneous particle density (\textit{e.g.}, see Eq.\eqref {eq:rho0}). In the grand canonical ensemble, the particle numbers in the system are not fixed; they fluctuate along with the microstates. 

At a given microstate, the system energy and the density profiles of all species are determined by the many-body Schr\"{o}dinger equation. The probability of the system at each microstate is then given by   
\begin{equation}
\label{eq:pnu}
p_{\nu} = \frac{1}{\Xi} {\exp \big\{-\beta[K_\nu + \Gamma_\nu + \sum_{\alpha} \int d{\bf {r}} \hat \rho_{\alpha}({\bf {r}}) v_{\alpha}({\bf {r}})] \big\}}  
\end{equation}
where $\beta=1/(k_{B}T)$, $K_\nu$ and $\Gamma_\nu$ are, respectively, the kinetic and potential energies of the system at microstate $\nu$, and $\Xi$ stands for the grand partition function 
\begin{equation}
\label{eq:gpf}
\Xi \equiv \sum_{\nu} {\exp \big\{-\beta[K_\nu + \Gamma_\nu + \sum_{\alpha} \int d{\bf {r}} \hat \rho_{\alpha}({\bf {r}}) v_{\alpha}({\bf {r}})] \big\}} 
\end{equation}
Eq.\eqref {eq:pnu} can be derived from the second law of thermodynamics \textit{i.e.}, the system entropy is maximized subject to appropriate constraints. Alternatively, it may be obtained from the Gibbs variational principle 
\begin{equation}
\label{eq:GVP}
\Omega [p_{\nu}]  \leq \Omega [p'_{\nu}]  
\end{equation}
where
\begin{equation}
\Omega [p_{\nu}]  \equiv \sum_{\nu} p_{\nu} \big[k_{B}T \ln p_{\nu} + K_\nu + \Gamma_\nu + \sum_{\alpha} \int d{\bf {r}} \hat \rho_{\alpha}({\bf {r}}) v_{\alpha}({\bf {r}})\big]
\end{equation}
and $p'_{\nu}$ stands for the probability for an arbitrary distribution of the microstates. In Eq.\eqref{eq:GVP}, the equal sign holds only when $p_{\nu}=p'_{\nu}$. 

The grand potential of the system is defined as  
\begin{equation}
\Omega \equiv -k_BT \ln\Xi = \Omega [p_{\nu}] 
\end{equation}
where $p_{\nu}$ corresponds to the equilibrium probability.  From a thermodynamic perspective, the grand potential is the free energy of an open system which takes a minimum value at equilibrium.   
 \end{backgroundinformation}
 
In a nutshell, DFT may be summarized in terms of two theorems and one corollary. These theorems were first established by Hohenberg and Kohn for inhomogeneous electronic systems at 0 K \cite{RN40} and later extended by Mermin to electronic systems at finite temperature \cite{RN71}. In essence, the Hohenberg-Kohn (HK) theorem shows a unique relationship between one-body density and one-body potential without entailing any specific knowledge of the mcirostates of a many-particle system. As a result, it holds true for electrons at 0 K as well as multi-component thermodynamic systems of either quantum or classical particles \cite{RN72, RN46,RN47,RN75}. The corollary is known as the Kohn-Sham (KS) scheme or \textit {KS ansatz}. It is instrumental for practical applications of various DFT methods for electronic systems \cite{RN42}. 

Despite its profound implications, the proof for the HK theorem (and its variations) is rather straightforward. In the following, we discuss these theorems and the corollary in the general form. 
\begin{theorem}
\label{HK1}
For a many-particle system of volume $V$ and temperature $T$, the one-body potential for each type of particles $v_{\alpha} ({\bf{r}})$, and hence all equilibrium properties of the system, can be uniquely determined by the one-body density profiles $\rho_{\alpha} ({\bf{r}})$.   
\end{theorem}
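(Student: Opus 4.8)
The plan is a reductio ad absurdum in the grand canonical framework set up in Box~\ref{BSM}, following Hohenberg--Kohn as adapted by Mermin. Suppose two one-body potentials $\{v_{\alpha}({\bf r})\}$ and $\{v'_{\alpha}({\bf r})\}$ that differ on a set of positive measure for at least one species nevertheless produce identical equilibrium one-body densities $\{\rho_{\alpha}({\bf r})\}$. Each potential fixes a Hamiltonian through its kinetic and interaction parts plus the one-body coupling $\sum_{\alpha}\int d{\bf r}\,\hat\rho_{\alpha}({\bf r})v_{\alpha}({\bf r})$, hence, via Eq.\eqref{eq:pnu}, a distinct equilibrium microstate distribution $p_{\nu}$ and $p'_{\nu}$ with grand potentials $\Omega=\Omega[p_{\nu}]$ and $\Omega'=\Omega[p'_{\nu}]$. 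The first step is to record that $p_{\nu}\neq p'_{\nu}$: both are Boltzmann weights over the same $K_{\nu}+\Gamma_{\nu}$, so equality of the normalized distributions would force $v_{\alpha}=v'_{\alpha}$ almost everywhere, against the hypothesis.

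Next I would use $p'_{\nu}$ as a trial distribution in the Gibbs variational functional Eq.\eqref{eq:GVP} for the \emph{unprimed} system; since $p_{\nu}\neq p'_{\nu}$ the inequality is strict, and adding and subtracting the primed coupling inside the bracket rewrites the right-hand side as $\Omega'+\sum_{\alpha}\int d{\bf r}\,\langle\hat\rho_{\alpha}({\bf r})\rangle'\,[v_{\alpha}({\bf r})-v'_{\alpha}({\bf r})]$, where $\langle\cdot\rangle'$ is the average over the primed equilibrium. By hypothesis that density equals $\rho_{\alpha}({\bf r})$, so
\begin{equation}
\Omega < \Omega' + \sum_{\alpha}\int d{\bf r}\,\rho_{\alpha}({\bf r})\,[v_{\alpha}({\bf r})-v'_{\alpha}({\bf r})].
\end{equation}

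Exchanging the roles of primed and unprimed yields the companion bound
\begin{equation}
\Omega' < \Omega + \sum_{\alpha}\int d{\bf r}\,\rho_{\alpha}({\bf r})\,[v'_{\alpha}({\bf r})-v_{\alpha}({\bf r})],
\end{equation}
and adding the two displayed inequalities makes the density integrals cancel, leaving $\Omega+\Omega'<\Omega+\Omega'$, a contradiction. Hence no two distinct one-body potentials share the same equilibrium densities, so the map $\{v_{\alpha}\}\mapsto\{\rho_{\alpha}\}$ is injective and invertible on its range; once $\{v_{\alpha}({\bf r})\}$ is recovered from $\{\rho_{\alpha}({\bf r})\}$ the Hamiltonian and the grand partition function Eq.\eqref{eq:gpf} are fixed, and with them every equilibrium property, which is the assertion.

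The argument is short, and the step needing the most care is the strict inequality in the Gibbs variational principle: it rests on $p_{\nu}\neq p'_{\nu}$, which is exactly where the standard qualification ``up to an additive constant'' (immaterial here, since the chemical potential is absorbed into $v_{\alpha}$ in the grand canonical setting) and the uniqueness of the Boltzmann form enter. A secondary point worth stating explicitly is that the theorem presumes the density in question does arise from \emph{some} admissible one-body potential ($v$-representability); only the uniqueness half is established by the contradiction above, and the existence half is a separate matter not needed here.
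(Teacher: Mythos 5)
Your proposal is correct and takes essentially the same route as the paper: a reductio via the Gibbs variational principle of Eq.\eqref{eq:GVP}, producing the two reciprocal inequalities between $\Omega$ and $\Omega'$ and concluding that distinct one-body potentials cannot share the same equilibrium densities. If anything, your closing step is tighter than the paper's — you justify strictness of the Gibbs inequality from $p_{\nu}\neq p'_{\nu}$ and add the two bounds to reach a clean contradiction, whereas the paper keeps non-strict inequalities and appeals somewhat loosely to the non-negativity of the density; your explicit remarks on the additive-constant caveat and $v$-representability are also welcome and consistent with the paper's discussion.
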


\begin{proof}
As discussed above, an open system can be defined by volume $V$, temperature $T$, the one-body potential for each type of particles  $v_{\alpha} ({\bf{r}})$. Correspondingly, there exists a set of equilibrium one-body density profiles $\rho_{\alpha} ({\bf{r}})$ corresponding to the statistical distributions of particles in the system. Suppose that  two one-body potentials, $v_{\alpha} ({\bf{r}})$  and $v'_{\alpha} ({\bf{r}})$, lead to the same one-body density, $\rho_{\alpha}({\bf {r}})$. These one-body potentials would generate two sets of probabilities for the equilibrium distributions of the microstates, $p_{\nu}$  and $p'_{\nu}$. These probabilities yield the same one-body density: 
\begin{equation}
\rho_{\alpha}({\bf {r}}) = \sum_{\nu} p_{\nu} \hat \rho_{\alpha}({\bf {r}})= \sum_{\nu} p'_{\nu} \hat \rho_{\alpha}({\bf {r}})
\end{equation}
According to the Gibbs variational principle, we have 
\begin{eqnarray}
\Omega [p_{\nu}] \leq  \sum_{\nu} p'_{\nu} \bigg[k_{B}T \ln p'_{\nu} + K_\nu + \Gamma_\nu\bigg] + \sum_{\alpha} \int d{\bf {r}} \rho_{\alpha}({\bf {r}}) v_{\alpha}({\bf {r}}) 
 \notag\\ =\Omega' [p'_{\nu}] + \sum_{\alpha} \int d{\bf {r}} \rho_{\alpha}({\bf {r}}) [v_{\alpha}({\bf {r}})-v'_{\alpha}({\bf {r}})]
\end{eqnarray}
As both $p_{\nu}$  and $p'_{\nu}$ correspond to equilibrium distributions for the microstates of the system, the same inequality holds when primed and unprimed quantities switch the positions,
\begin{equation}
\Omega' [p'_{\nu}] \leq  \Omega [p_{\nu}] + \sum_{\alpha} \int d{\bf {r}} \rho_{\alpha}({\bf {r}}) [v'_{\alpha}({\bf {r}})-v_{\alpha}({\bf {r}})]
\end{equation}
Because the particle density is everywhere non-negative, the only way to satisfy both inequalities is  $v_{\alpha} ({\bf{r}})= v'_{\alpha} ({\bf{r}})$. In other words, the one-body potentials must be uniquely determined by the one-body density profiles. 

Theorem \ref{HK1} indicates that, in principle, one can determine the one-body potentials from the one-body density profiles. With the one-body potentials, all equilibrium properties of the systems, including the distribution of microstates $p_\nu$, can be subsequently calculated by using standard statistical-mechanical methods.    
\end{proof}

\begin{theorem}
\label{HK2}
For any system of volume $V$, temperature $T$, and a one-body potential for each type of particles $v_{\alpha} ({\bf{r}})$, the equilibrium one-body density profiles $\rho_{\alpha} ({\bf{r}})$ is determined by minimizing the grand potential
\begin{equation}
\Omega [\rho_{\alpha} ({\bf{r}})]  \equiv  F[\rho_{\alpha} ({\bf{r}})] + \sum_{\alpha} \int d{\bf {r}} \rho_{\alpha}({\bf {r}}) v_{\alpha}({\bf {r}})
\end{equation}
where $F$ is known as the intrinsic Helmholtz energy 
\begin{equation}
F[\rho_{\alpha} ({\bf{r}})] \equiv  \sum_{\nu} p_{\nu} \big[k_{B}T \ln(p_{\nu})+K_\nu + \Gamma_\nu \big]. 
\end{equation} 
\end{theorem}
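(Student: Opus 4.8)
The plan is to reduce Theorem \ref{HK2} to the Gibbs variational principle, Eq.\eqref{eq:GVP}, with Theorem \ref{HK1} doing the work of turning $F$ into a genuine functional of the density. First I would observe that $F[\rho_\alpha(\mathbf{r})]$ is well-defined only on the set of \emph{$v$-representable} densities: if $\rho_\alpha(\mathbf{r})$ is the equilibrium density of \emph{some} one-body potential, then by Theorem \ref{HK1} that potential --- and hence the equilibrium microstate distribution $p_\nu$ appearing in the definition of $F$ --- is unique, so the sum defining $F$ is unambiguous. The minimization in the statement should be read as ranging over this domain; I would flag the $v$-representability restriction explicitly (one can appeal to a Levy--Lieb-type constrained-search construction to enlarge the domain to essentially all reasonable densities, but that is not needed for the argument).

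Next I would fix the physical system: let $v_\alpha(\mathbf{r})$ be the actual one-body potential, $p^{0}_\nu$ the corresponding equilibrium distribution from Eq.\eqref{eq:pnu}, and $\rho^{0}_\alpha(\mathbf{r})$ its equilibrium density. Let $\rho'_\alpha(\mathbf{r})$ be any other $v$-representable density, arising from some potential $v'_\alpha$ with equilibrium distribution $p'_\nu$. The key rearrangement is to re-express $\Omega[\rho'_\alpha(\mathbf{r})]$ --- whose external-field term contains the \emph{physical} potential $v_\alpha$, not $v'_\alpha$ --- back in terms of microstates. Using $\rho'_\alpha(\mathbf{r}) = \sum_\nu p'_\nu \hat\rho_\alpha(\mathbf{r})$ exactly as in the proof of Theorem \ref{HK1}, the coupling term becomes $\sum_\nu p'_\nu \sum_\alpha \int d\mathbf{r}\,\hat\rho_\alpha(\mathbf{r}) v_\alpha(\mathbf{r})$; combining this with the definition of $F[\rho'_\alpha(\mathbf{r})]$ reproduces exactly the microstate grand-potential functional of Box \ref{BSM} built with the physical potential $v_\alpha$ and evaluated at $p'_\nu$. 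The same manipulation applied to $p^{0}_\nu$ shows that $\Omega[\rho^{0}_\alpha(\mathbf{r})]$ equals that functional evaluated at $p^{0}_\nu$.

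The conclusion is then immediate: since $p^{0}_\nu$ is the equilibrium distribution for the physical potential $v_\alpha$, Eq.\eqref{eq:GVP} gives $\Omega[\rho^{0}_\alpha(\mathbf{r})] \leq \Omega[\rho'_\alpha(\mathbf{r})]$, with equality only if $p^{0}_\nu = p'_\nu$, which by Theorem \ref{HK1} forces $\rho'_\alpha(\mathbf{r}) = \rho^{0}_\alpha(\mathbf{r})$; hence the equilibrium density is the unique minimizer of $\Omega[\rho_\alpha(\mathbf{r})]$. The only genuine subtlety --- and the step I expect to demand the most care in a fully rigorous write-up --- is the $v$-representability question raised in the first paragraph; everything after that is bookkeeping plus a direct appeal to Eq.\eqref{eq:GVP}.
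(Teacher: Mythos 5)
Your proposal is correct and follows essentially the same route as the paper's own proof: restrict to $v$-representable densities, use Theorem \ref{HK1} to make $p'_{\nu}$ (and hence $F$) a well-defined function of the trial density, and then invoke the Gibbs variational principle of Eq.~\eqref{eq:GVP}. You simply spell out the bookkeeping the paper leaves implicit (rewriting the coupling term via $\rho'_{\alpha}=\sum_{\nu}p'_{\nu}\hat\rho_{\alpha}$ so that $\Omega[\rho'_{\alpha}]$ is literally the Gibbs functional at $p'_{\nu}$) and add the uniqueness of the minimizer, both of which are sound.
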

\begin{proof}
This theorem is also known as the HK variational principle. The proof proceeds as follows. Supposed that $\rho'_{\alpha} ({\bf{r}})$ is the equilibrium density \textit{associated} with any other one-body potential $v'_{\alpha} ({\bf{r}})$, which generates microstate probability $p'_{\nu}$ for the distribution of microstates. Because $v'_{\alpha} ({\bf{r}})$ and subsequently $p'_{\nu}$ are uniquely determined by $\rho'_{\alpha} ({\bf{r}})$, we can rewrite the Gibbs variational principle (see Eq. \eqref{eq:GVP}) as
\begin{equation}
\Omega [\rho_{\alpha} ({\bf{r}})]  \leq  \Omega [\rho'_{\alpha} ({\bf{r}})] 
\end{equation}
Therefore, the equilibrium density $\rho_{\alpha} ({\bf{r}})$ minimizes the grand potential. 

It is worth noting that $\rho'_{\alpha} ({\bf{r}})$ must be \textit{associated} with some meaningful one-body potential $v'_{\alpha} ({\bf{r}})$. Otherwise,  $p'_{\nu}$ is not even defined and thus the inequality may not be valid. The inherent constraint of the density profiles in the HK variational principle is known as \textit {``v-representable densities''} \cite{RN76}.
\end{proof} 

\begin{corollary}
\label{KS}
For any equilibrium system of volume $V$, temperature $T$, and one-body density profiles $\rho_{\alpha} ({\bf{r}})$, there exists a non-interacting reference system that reproduces the one-body density profiles.
\end{corollary}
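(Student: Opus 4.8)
The plan is to reduce the corollary to an invertibility property of the density--potential map restricted to \emph{non-interacting} particles, which is itself a special case of Theorem~\ref{HK1}. Consider the family of auxiliary systems sharing the same volume $V$, temperature $T$, and species as the system of interest, but with every particle--particle interaction switched off, so that $\Gamma_\nu = 0$ for all microstates $\nu$. For such an ideal family the intrinsic Helmholtz functional collapses to the purely kinetic/entropic object
\begin{equation}
\label{eq:Fid}
F_{\mathrm{id}}[\rho_{\alpha}(\mathbf{r})] \equiv \sum_{\nu} p_{\nu}\big[k_{B}T\ln(p_{\nu}) + K_\nu\big],
\end{equation}
which --- crucially --- is an \emph{explicitly known} functional of the profiles $\rho_\alpha$: the classical ideal-gas free energy for classical particles, and the Fermi/Bose expression assembled from single-particle occupation numbers for quantum particles. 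First I would invoke Theorem~\ref{HK1} for this ideal family, which tells us that distinct one-body potentials $v^{\mathrm{id}}_\alpha(\mathbf{r})$ give distinct equilibrium densities, so a non-interacting system is labelled uniquely by the density it produces.

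Next I would produce the required effective potential by construction. By Theorem~\ref{HK2} applied to the ideal family, the equilibrium density of a non-interacting system in an external field $v^{\mathrm{KS}}_\alpha(\mathbf{r})$ minimizes $\Omega_{\mathrm{id}}[\rho_\alpha] = F_{\mathrm{id}}[\rho_\alpha] + \sum_\alpha\int d\mathbf{r}\,\rho_\alpha(\mathbf{r})\,v^{\mathrm{KS}}_\alpha(\mathbf{r})$, whose stationarity (Euler--Lagrange) condition is $\delta F_{\mathrm{id}}/\delta\rho_\alpha(\mathbf{r}) + v^{\mathrm{KS}}_\alpha(\mathbf{r}) = \mu_\alpha$, with $\mu_\alpha$ the chemical potential fixing the mean population of species $\alpha$. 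Given the target profiles $\rho_\alpha(\mathbf{r})$ coming from the interacting system of Corollary~\ref{KS}, I would simply \emph{define}
\begin{equation}
\label{eq:vKS}
v^{\mathrm{KS}}_\alpha(\mathbf{r}) \equiv \mu_\alpha - \frac{\delta F_{\mathrm{id}}}{\delta\rho_\alpha(\mathbf{r})},
\end{equation}
the functional derivative being evaluated at the target density; this is well defined precisely because $F_{\mathrm{id}}$ is available in closed form. With this choice $\rho_\alpha(\mathbf{r})$ solves the stationarity condition of $\Omega_{\mathrm{id}}$, and since $F_{\mathrm{id}}$ is convex in the densities the stationary point is the unique minimizer; hence the non-interacting system carrying the one-body potential $v^{\mathrm{KS}}_\alpha$ reproduces $\rho_\alpha(\mathbf{r})$ as its equilibrium one-body density. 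For electrons this is the familiar statement that the true density is generated by a set of single-particle Kohn--Sham orbitals filling an effective potential $v^{\mathrm{KS}}$.

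The one genuinely delicate point --- and the step I expect to be the main obstacle --- is the tacit assumption that the given $\rho_\alpha(\mathbf{r})$ actually lies in the range of the ideal density--potential map, i.e.\ that it is \emph{non-interacting $v$-representable}, the analogue of the $v$-representability caveat already flagged after Theorem~\ref{HK2}. Equivalently, one must ensure that the potential built in \eqref{eq:vKS} makes the ideal-system minimization return exactly $\rho_\alpha$ rather than some other density, and this does not follow automatically from the fact that $\rho_\alpha$ was the equilibrium density of an \emph{interacting} system. At zero temperature there are known (admittedly pathological) densities that fail to be non-interacting $v$-representable, so in the strict sense the corollary holds only on that class of densities --- which is why it enters practical DFT as an \emph{ansatz}. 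I would close by noting that the difficulty softens considerably at finite $T$, where the attainable densities form an open dense set, and that it can be sidestepped entirely by replacing the potential-based construction of $F_{\mathrm{id}}$ with a Levy--Lieb constrained-search definition, which requires no $v$-representability hypothesis.
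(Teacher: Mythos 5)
Your argument follows the same route the paper takes: invoke Theorem~\ref{HK1} applied to the non-interacting family to label a reference system uniquely by its density, then use the stationarity (Euler--Lagrange) condition from Theorem~\ref{HK2} to read off the effective one-body potential. Your formula $v^{\mathrm{KS}}_\alpha = \mu_\alpha - \delta F_{\mathrm{id}}/\delta\rho_\alpha$ is equivalent to the paper's $v_{0,\alpha} = v_\alpha + \delta\Delta F/\delta\rho_\alpha$: eliminating $\mu_\alpha$ via the real system's own stationarity condition $\delta F/\delta\rho_\alpha + v_\alpha = \mu_\alpha$ and writing $F = F_{\mathrm{id}} + \Delta F$ turns one into the other. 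So the two are the same proof in different bookkeeping.

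Where you go beyond the paper is in being honest about the logical status of the existence claim. The paper asserts that ``Theorem~\ref{HK1} indicates that a unique set of one-body potentials $v_{0,\alpha}(\mathbf{r})$ can be determined for the reference system,'' but HK1 establishes only \emph{injectivity} of the potential-to-density map (no two potentials give the same density); it says nothing about \emph{surjectivity}, i.e.\ whether a density coming from an interacting system is in the range of the non-interacting map. Your paragraph on non-interacting $v$-representability correctly identifies this as the one real gap, points out that it is not automatic, and notes the standard remedies (the caveat softens at finite $T$, and disappears under a Levy--Lieb constrained-search definition of $F_{\mathrm{id}}$). This is a genuine improvement in rigor over the paper's exposition, which flags $v$-representability only in the context of Theorem~\ref{HK2} and then silently assumes its non-interacting analogue when stating the corollary. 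The convexity remark you add to upgrade ``stationary point'' to ``unique minimizer'' for $\Omega_{\mathrm{id}}$ is also a step the paper leaves implicit and is worth having.
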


Formally, any intrinsic Helmholtz energy can be expressed in terms of that corresponding to an non-interacting reference system of the same $V$ and $T$, $F_{0}[\rho_{\alpha} ({\bf{r}})]$, plus the difference, $\Delta F[\rho_{\alpha} ({\bf{r}})] $:
\begin{equation}
F[\rho_{\alpha} ({\bf{r}})] \equiv F_{0}[\rho_{\alpha} ({\bf{r}})] + \Delta F[\rho_{\alpha} ({\bf{r}})] 
\end{equation} 
Given a set of density profiles, $\rho_{\alpha} ({\bf{r}})$, theorem \ref{HK1} indicates that a unique set of one-body potentials, $v_{0, \alpha} ({\bf{r}})$, can be determined for the reference system. According to theorem \ref{HK2}, the density profiles minimize the grand potential of the system under consideration as well as that of the non-interacting reference system
\begin{eqnarray}
\Omega [\rho_{\alpha} ({\bf{r}})]  =F_{0}[\rho_{\alpha} ({\bf{r}})] + \Delta F[\rho_{\alpha} ({\bf{r}})] + \sum_{\alpha} \int d{\bf {r}} \rho_{\alpha}({\bf {r}}) v_{\alpha}({\bf {r}}) \\
\Omega_{0} [\rho_{\alpha} ({\bf{r}})]  =  F_{0}[\rho_{\alpha} ({\bf{r}})] + \sum_{\alpha} \int d{\bf {r}} \rho_{\alpha}({\bf {r}}) v_{0, \alpha}({\bf {r}})
\end{eqnarray}
Following the HK variational principle $\delta \Omega / \delta \rho_{\alpha} ({\bf{r}}) = \delta \Omega_{0} / \delta \rho_{\alpha} ({\bf{r}}) = 0$, we obtain an explicit expression for the one-body potentials of the reference system 
\begin{equation}
v_{0, \alpha}({\bf {r}}) = v_{\alpha}({\bf {r}}) + \frac {\delta \Delta F[\rho_{\alpha} ({\bf{r}})]} {\delta \rho_{\alpha} ({\bf{r}})} 
\end{equation} 
Because the intrinsic Helmholtz energy for a system of non-interacting particles is relatively easy to evaluate, the KS scheme provides a feasible way to carry out the HK variational principle without specific knowledge about the microstates of the real many-body system. 

It is worth noting that the Hohenberg-Kohn (HK) theorem and the Kohn-Sham (KS) scheme are valid not only for many-body systems at the ground state but, in general, for any thermodynamic systems. While the vast majority DFT calculations up-to-date are concerned only with electrons at 0 K, more applications of DFT to ``multi-component'' systems are emerging in recent years. 

\begin{backgroundinformation}{The Kohn-Sham DFT}
Consider a spin-symmetric system containing $2N$ electrons at a nondegenerate ground state, the HK theorem asserts that the ground-state energy can be obtained from the variational principle. In the KS scheme, the reference system consists of non-interacting electrons, \textit{i.e.}, ideal Fermions, which provides a basis to evaluate the variational energy. 

The wave function of ideal Fermions can be expressed in terms of the Slater determinant  
\begin{eqnarray}
\label {eqn:Slater}
	\Psi({\bf{r}}_1,{\bf{r}}_2,...,{\bf{r}}_N) &=& \frac{1}{\sqrt{N!}} 
		\left|
			\begin{array}{cccc}
				\psi_1({\bf{r}}_1) & \psi_2({\bf{r}}_1) & \ldots & \psi_N({\bf{r}}_1) \\
				\psi_1({\bf{r}}_2) & \psi_2({\bf{r}}_2) & \ldots & \psi_N({\bf{r}}_2) \\
				\vdots            & \vdots            &        & \vdots            \\
				\psi_1({\bf{r}}_N) & \psi_2({\bf{r}}_N) & \ldots & \psi_N({\bf{r}}_N) \\
			\end{array}
		\right|
\end{eqnarray}
where $\psi_i({\bf{r}}), i=1, 2, \ldots, N$ represents a single-particle wave function. The Slater determinant accounts for the Fermion exchange effect that remains between ``non-interacting'' electrons. In stark contrast to that for the electronic system, the wave function for ideal Ferminions can be decomposed as a product of 3-dimensional functions. 

In the present of one-body potential $v_{0}({\bf{r}})$, the single-particle wave functions in the Slater determinant satisfy the one-particle Schr\"{o}dinger equation   
\begin{equation}
\label{eq:KS0}
\left[-\frac{\hbar ^2}{2{m_{e}}}\nabla ^2 + v_{0}(\bf{r}) \right]\psi_i(\bf{r}) = \epsilon_i \psi_i(\bf{r})
\end{equation}
where $\epsilon_i$ represents the single-particle energy of an ideal Fermion. According to the KS scheme, the single-particle wave functions of the ideal Ferminions must satisfy the orthonomal conditions (Eq.\eqref {eq:ON} ) and reproduce the electron density of the real system 
\begin{equation}
\rho({\bf {r}}) = 2 \sum^{N}_{i=1} \left| \psi_i ({\bf{r}}) \right|^{2} 
\end{equation}
where a factor of 2 accounts for spin pairs. 

To find the one-body potential in the reference system ($v_{0}({\bf{r}})$ in Eq.\eqref{eq:KS0}), we use the HK variational principle. The energy of the reference system and that of the real system are given by, respectively,
\begin{eqnarray}
\label{eqna: EE0}
E_{0}[\rho({\bf {r}})] = K_{0}[\rho({\bf {r}})] + \int d{\bf {r}} \rho({\bf {r}}) v_{0}({\bf {r}}) \\
E[\rho({\bf {r}})] = K_{0}[\rho({\bf {r}})] + \Delta E[\rho({\bf {r}})] + \int d{\bf {r}} \rho({\bf {r}}) v({\bf {r}})
\end{eqnarray}   
where $\Delta E[\rho({\bf {r}})]$ represents the difference between the intrinsic energies of the reference and real systems. In the former case, the intrinsic energy corresponds to the kinetic energy of ideal Fermions, $K_{0}[\rho({\bf {r}})]$. Meanwhile, the intrinsic energy of the electrons includes both kinetic and potential contributions. 

Formally, $\Delta E[\rho({\bf {r}})]$  can be written as
\begin{equation}
\Delta E[\rho({\bf {r}})]  = K[\rho({\bf {r}})] -K_{0}[\rho({\bf {r}})] + \Gamma[\rho({\bf {r}})] \equiv J[\rho({\bf {r'}})]  + E_{xc}[\rho({\bf {r}})] 
\end{equation}
where $J[\rho({\bf {r}})]$  is known as the Hartree energy, which accounts for the direct electrostatic interaction among the electrons
\begin{equation}
J[\rho({\bf {r}})] =  \frac{e^2}{8\pi \varepsilon_0} \int {\bf {dr}} \int {\bf {dr'}} \frac {\rho({\bf {r}})\rho({\bf {r'}})} {\left| \bf{r} - \bf{r'}\right|} 
\end{equation}
and $E_{xc}[\rho({\bf {r}})]$, an unknown quantity, defines the exchange-correlation energy.

A comparison of the functional derivatives of the real and reference energies with respect to the electron density leads to
\begin{equation}
\label{eq:v0}
v_{0}({\bf{r}}) =  v({\bf {r}}) +  v_{c}({\bf{r}}) + v_{xc}({\bf{r}})
\end{equation}
where $v_{c}({\bf{r}})$ is the Coulomb potential 
\begin{equation}
v_{c}({\bf{r}}) \equiv \delta J[\rho({\bf {r}})]/ \delta \rho({\bf {r}}) = \frac{e^2}{4\pi \varepsilon_0} \int {\bf {dr'}} \frac {\rho({\bf {r'}})} {\left| \bf{r}_{i} - \bf{r}_{j} \right|}
\end{equation}
and $v_{xc}({\bf{r}})$ is known as the exchange-correlation potential 
\begin{equation}
v_{xc}({\bf{r}}) \equiv  {\delta E^{xc}}/{\delta \rho({\bf {r}})}
\end{equation}
Substituting Eq.\eqref{eq:v0} into \eqref{eq:KS0} leads to the celebrated KS equation  \cite{RN42}
\begin{equation}
\label{eq:KSE}
\left[-\frac{\hbar ^2}{2{m_{e}}}\nabla ^2 + v_{c}({\bf{r}}) + v_{xc}({\bf{r}}) + v({\bf{r}}) \right]\psi_i(\bf{r}) = \epsilon_i \psi_i(\bf{r})
\end{equation}

So far the theoretical procedure is exact except that $E_{xc}[\rho({\bf {r}})]$ remains unknown. Because the exchange-correlation energy is related to the difference between the energy of many electrons and that of ideal Fermions with the same one-body density, an exact expression for $E_{xc}[\rho({\bf {r}})]$ can be attained only by solving the original many-body problem. One remarkable feature of DFT is that reasonable results can be achieved even with relatively simple approximations. For example, for systems such as metals that have near uniform electron densities, a reasonable guess of the exchange-correlation energy is provided by the so-called local density approximation (LDA)
\begin{equation}
E_{xc}[\rho({\bf {r}})]  = \int {\bf {dr}} \rho({\bf {r}}) \epsilon_{xc} (\rho({\bf {r}}))  
\end{equation}     
where $\epsilon_{xc} (\rho)$ is the per electron exchange-correlation energy for a uniform electron gas of density $\rho$. As discussed above, accurate results for $\epsilon_{xc} (\rho)$ can be obtained from quantum Monte Carlo simulation. Correspondingly, the exchange-correlation potential is given by
\begin{equation}
v_{xc}({\bf{r}})  = \epsilon_{xc} ({\bf{r}}) + \rho({\bf {r}}) \frac{d \epsilon_{xc} }{d \rho}  
\end{equation} 
Understandably, LDA breaks down for systems with highly inhomogeneous electron distributions. Tremendous efforts have been devoted to the development of better approximations for the exchange-correlation energy since1980s \cite{RN39}. 

With an approximate expression for the exchange-correlation energy, the KS equations can be solved with various numerical methods  \cite{RN69}. Subsequently, the ground-state energy can be calculated from 
\begin{equation}
\label{eq:Ene}
E[\rho({\bf {r}})]  = 2 \sum^{N}_{i=1} \epsilon_i -J[\rho({\bf {r}})] - \int {\bf {dr}} \rho^{2}({\bf {r}}) \frac{d \epsilon_{xc} ({\bf{r}})}{d \rho({\bf {r}})}     
\end{equation}

It is worth noting that the KS equation applies only to the reference system of ideal Fermions, \textit {i.e.}, the non-interacting reference system. While the reference system reproduces the one-body density of the real electronic system, its total energy is NOT the same as the ground-state energy of the real system. Neither the single-particle energy levels nor the single-particle wave functions of the ideal Fermions are relevant to any physical quantities of the real electronic system under consideration. In the KS scheme, the reference system is introduced in order to avoid the direct evaluation of the many-body wave functions. Another point one should keep in mind is that, at least in its original form, the KS-DFT is concerned only with the ground-state properties of electronic systems at 0 K.    
 \end{backgroundinformation}

For systems at finite temperature, DFT may be considered as implementing thermodynamics calculations in the Hilbert space: instead of using equation of state to represent thermodynamic properties as functions of macroscopic variables, DFT calculations are based on the HK variational principle with the thermodynamic properties formulated as functionals of the one-body density profiles. Given an analytical expression for the grand potential functional, one can derive all thermodynamic properties including multi-body correlation functions \cite{RN39}. While thermodynamics offers no information on the equations of state for any macroscopic systems, the HK theorems provides little insight on how one may formulate the grand-potential functional. Like equations of state for bulk thermodynamic systems, reliable density functionals can only be derived from quantum and statistical mechanics, often entailing complicated mathematical procedures. 

The KS scheme is instrumental not only in practical implementation of DFT calculations but also for formulation of the functionals. By adopting a non-interacting system as the reference, it circumvents direct consideration of the microscope details of interacting particles thereby simplifies the physical picture and greatly reduces the computational effort. A similar approach has been commonly practiced in applied thermodynamics. Whereas the functional of real systems under consideration are typically unknown, a generic strategy may be used to derive $\Delta F$, the difference between the intrinsic Helmholtz energy of the real system and that of the non-interacting reference system. The method is known as ``adiabatic connection'' in quantum mechanics, or ``adiabate principle'' and ``thermodynamic integration" in statistical mechanics \cite{RN77}. 

For a system of electrons and nuclei, its connection with the non-interacting reference system can be in general established by scaling the Coulomb potential (see Eq.\eqref{eq:H}): 
\begin{equation}
\hat{H_{\lambda}}= \hat{H_{0}} + \lambda \hat{H_{c}}
\end{equation} 
When $\lambda=0$, $\hat{H_{\lambda}}$ corresponds to the Hamiltonian of a non-interacting reference system, and $\lambda=1$ recovers that of the real system. According to the Hellmann-Feynman theorem, the variation of the system energy with any coupling parameter in the Hamiltonian satisfies
\begin{equation}
\label{eq:HFa}
\frac {dE}{d\lambda}= \left< \frac {d\hat{H_{\lambda}}} {d\lambda} \right>_{\Psi_{\lambda}}
\end{equation} 
More explicitly,  Eq.\eqref{eq:HFa} can be written as
\begin{equation}
\label{eq:HFb}
\frac {dE}{d\lambda} = \frac{e^2}{8\pi \varepsilon_0} \sum_{\alpha} \sum_{\alpha '} \int d{\bf{r}} \int d{\bf{r'}} \hat {\rho}_{\alpha,\alpha'} ({\bf{r, r'}}|\lambda) \frac{Z_{\alpha} Z_{\alpha'}} {\left| \bf{r}_{\alpha} - \bf{r}_{\alpha'} \right|}\
\end{equation} 
where 
\begin{equation}
\hat {\rho}_{\alpha,\alpha'} ({\bf{r, r'}}|\lambda)= \left< \sum_{i_{\alpha}} \sum_{i_{\alpha '}} \delta(\bf{r}-\bf{r_{i_{\alpha}}}) \delta(\bf{r}-\bf{r_{i_{\alpha'}}}) \right>_{\Psi_{\lambda}}
\end{equation} 
According to the thermodynamic integration method, the change in the free energy due to the inter-particle potential is
\begin{eqnarray}
\label{eqn:TI}
\Delta F[\rho_{\alpha} ({\bf{r}})]   &=& \int^{1}_{0} d\lambda \left<  \frac {dE}{d\lambda} \right>_{\lambda}  \notag\\
&=& \frac{e^2}{8\pi \varepsilon_0} \int^{1}_{0} d\lambda \sum_{\alpha} \sum_{\alpha '} \int d{\bf{r}} \int d{\bf{r'}} \rho_{\alpha,\alpha'} ({\bf{r, r'}}|\lambda) \frac{Z_{\alpha} Z_{\alpha'}} {\left| \bf{r}_{\alpha} - \bf{r}_{\alpha'} \right|}\
\end{eqnarray} 
where subscript ${\lambda}$ denotes the ensemble average over the configurations of system with rescaled Hamiltonian $\hat{H_{\lambda}}$, and the two-body density function is defined as 
\begin{equation}
\rho_{\alpha,\alpha'} ({\bf{r, r'}}|\lambda)=\left< \hat {\rho}_{\alpha,\alpha'} ({\bf{r, r'}}|\lambda) \right> 
\end{equation}

In statistical mechanics, the two-body density is often expressed in terms of the radial distribution function 
\begin{equation}
g_{\alpha,\alpha'} ({\bf{r, r'}}|\lambda) \equiv \frac{\rho_{\alpha,\alpha'} ({\bf{r, r'}}|\lambda)}{\rho_{\alpha} ({\bf{r}})\rho_{\alpha'} ({\bf{r'}})}
\end{equation}
or the total correlation function 
\begin{equation}
h_{\alpha,\alpha'} ({\bf{r, r'}}|\lambda) \equiv \frac{\rho_{\alpha,\alpha'} ({\bf{r, r'}}|\lambda)}{\rho_{\alpha} ({\bf{r}})\rho_{\alpha'} ({\bf{r'}})}-1
\end{equation}
Correspondingly, $\Delta F$ can be written as
\begin{equation}
\Delta F[\rho_{\alpha} ({\bf{r}})]   = J[\rho_{\alpha} ({\bf{r}})] + F_{xc}[\rho_{\alpha} ({\bf{r}})] 
\end{equation} 
where
\begin{eqnarray}
F_{xc}[\rho_{\alpha} ({\bf{r}})]  = \frac{e^2}{8\pi \varepsilon_0} \int^{1}_{0} d\lambda \sum_{\alpha} \sum_{\alpha '} \int d{\bf{r}} \int d{\bf{r'}} 
 \notag\\  {\rho_{\alpha} ({\bf{r}})\rho_{\alpha'} ({\bf{r'}})} h_{\alpha,\alpha'} ({\bf{r, r'}}|\lambda) \frac{Z_{\alpha} Z_{\alpha'}} {\left| \bf{r}_{\alpha} - \bf{r}_{\alpha'} \right|}. 
\end{eqnarray}

Alternatively, the exchange-correlation free energy may be written as
\begin{eqnarray}
F_{xc}[\rho_{\alpha} ({\bf{r}})]  = \frac{e^2}{8\pi \varepsilon_0} \sum_{\alpha} \sum_{\alpha '} \int d{\bf{r}} \int d{\bf{r'}} 
 {\rho_{\alpha} ({\bf{r}}) \rho^{xc}_{\alpha,\alpha'} ({\bf{r, r'}})} \frac{Z_{\alpha} Z_{\alpha'}} {\left| \bf{r}_{\alpha} - \bf{r}_{\alpha'} \right|} 
\end{eqnarray}
where the exchange-correlation hole is defined as 
\begin{equation}
\rho^{xc}_{\alpha,\alpha'} ({\bf{r, r'}}) =  \int^{1}_{0} d\lambda \rho_{\alpha'} ({\bf{r'}}) h_{\alpha,\alpha'} ({\bf{r, r'}}|\lambda) \equiv \rho_{\alpha'} ({\bf{r'}}) \bar{h}_{\alpha,\alpha'} ({\bf{r, r'}}; \bar{\rho}) 
\end{equation} 
where $\bar{h}$ and $\bar{\rho}$ represent some averaged quantities. Because the Hartree energy accounts for direct Coulomb energy for electrostatic interactions, the exchange-correlation hole may be understood as the Coulomb energy of a charged particle $\alpha$ with a cavity of particle  $\alpha'$. It satisfies the normalization condition
\begin{equation}
 \int d{\bf{r'}}  \rho^{xc}_{\alpha,\alpha'} ({\bf{r, r'}}) =  1.
\end{equation} 

\begin{figure}
\centering
\includegraphics[width=120mm]{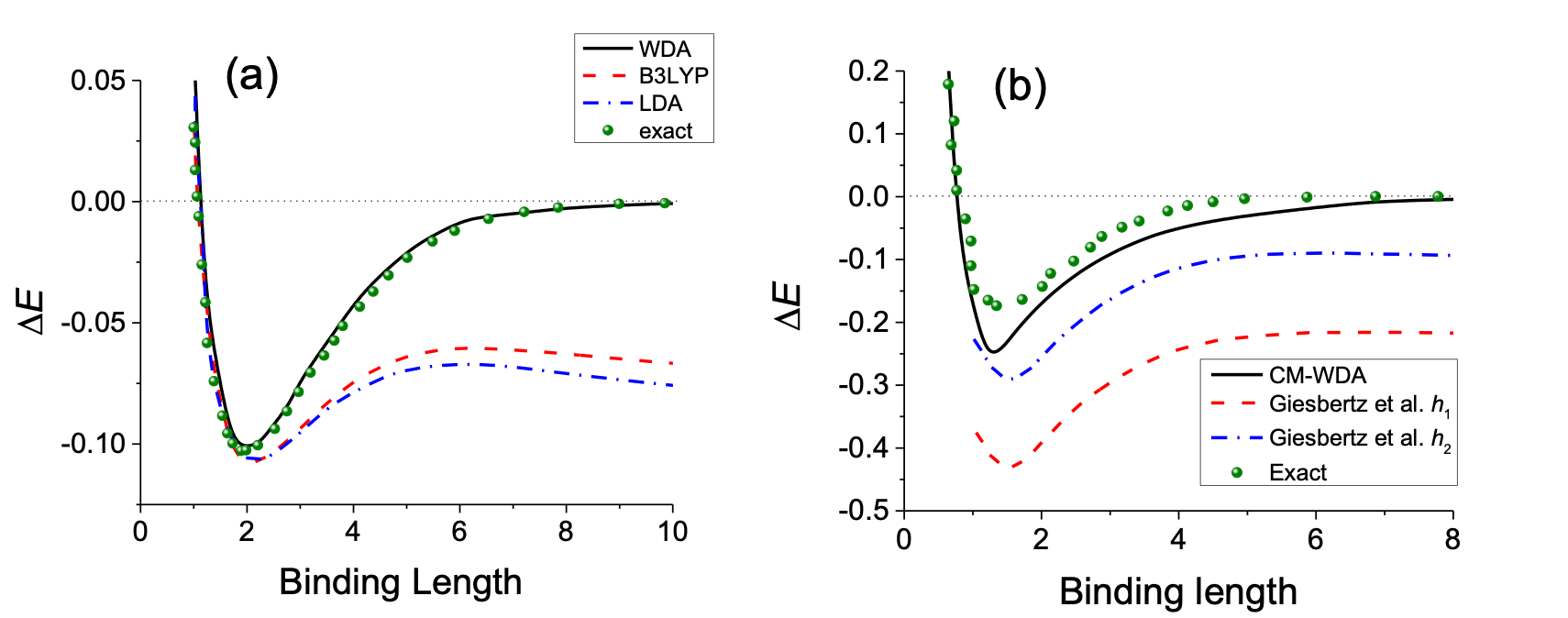}
\caption{(a) The binding energy curves (atom units) for H2+ calculated from different versions of KSDFT. B3LYP, LDA and exact results are from reference. The binding length is referred to the center-to-center distance between the two H atoms. (b) Binding energy curves for H2 calculated from different methods (atom units). Reproduced from \cite{RN79}. }
\label{fig:H2}
\end{figure}

Although an analytical expression for the exchange-correlation free energy is difficult to attain, the exact equations from the adiabatic connection are appealing because the physical meanings of various correlation function are rather intuitive. For example, Figure ~\ref{fig:H2} shows various DFT predictions for the binding energy curves for $H_{2}^{+}$ and $H_{2}$  \cite{RN79}. The solid lines are predictions from the adiabatic connection with the total correlation function represented by a simple weighted density approximation (WDA)
\begin{equation}
\bar{h} ({\bf{r, r'}}|\lambda) \approx h^{UEG} ({\left| \bf{r} - \bf{r} \right|},\bar{\rho})
\end{equation}    
where superscript ``UEG'' stands for uniform electron gas.  Whereas noticeable discrepancies are observed in comparison with exact results, WDA is free of delocalization (\textit {viz.}, no self-interaction in the single electron limit)and static correlation errors (\textit {viz.}, no binding energy between atoms in large separation) that are commonplace in many popular DFT functionals \cite{RN58}.       

Like many differential equations derived from physical models, the Schr\"{o}dinger equation is lack of an analytical solution with closed-form expressions. Conventionally, these equations are solved with the Galerkin methods, i.e., discretization of the partial differential equations into algebraic equations such that they become suitable for efficient implementation on a computer. Both plane-wave formalism and real space methods are well advanced for solving the Schr\"{o}dinger equation (and related DFT methods) \cite{RN80}.  In general, the numerical method have high computational complexity, which limits their applications to large systems of practical interests. Complementary to the numerical methods for solving the differential equations directly, the statistical and machine-learning models have long been utilized to emulate the numerical results and speed up theoretical predictions  \cite{RN81, RN82}. In the next section, we outline some recent developments in statistical and machine learning methods that offer an alternative way to circumvent solving computationally expensive the partially different equations directly. 
 
 \section{Gaussian process for scalar-valued functions}
\label{sec:GP}
 
  Gaussian process (GP) is {a} large class of statistical models that offer an alternative way to emulate a computationally expensive function with drastically less computational cost, and at the same time, has an internal assessment of uncertainty in emulation. Under some regularity conditions,  the estimator of the GP regression guarantees to converges to the true underlying function with respect to certain metric (e.g. $L_{\infty}$ or $L_2$ distance), with a known convergence rate as a function of the number of observations and ``smoothness" of truth.

  GP has been  widely used  for approximating computationally expensive computer models  \citep{sacks1989design,bayarri2007framework,higdon2008computer,spiller2014automating}. 
The statistical framework of {a} GP emulator is closely connected to the reproducing kernel Hilbert space and the kernel ridge regression (KRR), though  GP and KRR seem to be independently developed from two streams of research communities. In this section, we first introduce GP emulation and GP regression for scalar-valued functions from the probabilistic point of view in Section \ref{subsec:gpe} and Section \ref{subsec:gpr}, respectively.  The mathematical connection to the KRR and reproducing kernel Hilbert space is introduced in Section \ref{subsec:krr}, and the convergence properties that underpin these methods are introduced in Section \ref{subsec:convergence}. In the context of first-principles calculations,  GP models with new descriptors and kernels are developed  to maintain various physical properties, such as translational, permutational and rotational invariant properties \cite{bartok2013representing}. The recent advances of GP models for reproducing macroscopic quantities such as energy and mechanical properties, as well as atomic forces  for MD simulations, will be introduced in Section \ref{subsec:application_GP_scalar}.




\subsection{Gaussian process emulation}
\label{subsec:gpe}
Suppose we want to emulate a real-valued function with a scalar output $f_0: \mathcal X \to \mathbb R$ and $p$-dimensional input  $\mathbf x \in \mathcal X$. We model $f_0$ by a Gaussian process,  denoted as $f(\cdot) \sim \mbox{GP}(m(\cdot), \, K(\cdot,\cdot) )$, with mean $m: \mathcal X \to \mathbb R$, and a  covariance function (or a positive semidefinite kernel) $K: \mathcal X \times \mathcal X \to  \mathbb R$. Conditional on the mean and covariance function,  any marginal distribution $\mathbf f=(f(\mathbf x_1),...,f(\mathbf x_n))^T$ at $n$ inputs $\mathbf X=\{ \mathbf x_1,...,\mathbf x_n\}$ follows a multivariate normal distribution: 
\[\left((f(\mathbf x_1),...,f(\mathbf x_n))^T  \mid \mathbf m_X, K_{XX} \right) \sim \mathcal{MN} ( \mathbf m_X, \mathbf K_{XX} ) \]
where $\mathbf m_X=(m(\mathbf x_1),..., m(\mathbf x_n))^T$ is a vector of the mean and $\mathbf K_{XX}$ is an $n\times n$ covariance  matrix with the $(i,j)$th term being $K(\mathbf x_i, \mathbf x_j)$.

The mean is often modeled through a linear model of the basis functions: 
\begin{equation}
m(\mathbf x)= \mathbf h(\mathbf x) \bm \theta=\sum^q_{t=1} h_t(\mathbf x)\theta_t,
\label{equ:mean_basis}
\end{equation} 
where $h(\mathbf x)=(h_1(\mathbf x),...,h_q(\mathbf x))$ is a set of basis functions of $q$ dimensions, and  $\bm \theta=(\theta_1,...,\theta_q)^T$ is a vector of trend parameters, estimated from the data.  The mean is often held fixed to be zero in applications for simplicity, whereas a physical model of the basis functions may improve the predictive accuracy if the trend of underlying function can be captured by the basis functions. 

The covariance function (or kernel) $K(\cdot,\cdot)$ is the most critical component in a GP model. The GP is often assumed to be \textit{stationary} (or \textit{shift-invariant}), meaning that for any two input $\mathbf x=(x_{1},...,x_{p})$ and $\mathbf x'=(x'_{1},...,x'_{p})$, $K(\mathbf x,\mathbf x')=\sigma^2 c(\mathbf x-\mathbf x')$ 
 with $\sigma^2$ being a variance parameter and $c(\cdot)$ is a correlation function with $c(\mathbf 0)=1$. In modeling spatially correlated data, covariance function is often assumed to be \textit{isotropic}, where $K(\mathbf x,\mathbf x')=\sigma^2 c(||\mathbf x-\mathbf x'||)$, with $||\cdot ||$ being  the Euclidean distance. Frequently used correlation function include power exponential correlation and Mat{\'e}rn corrlation \cite{rasmussen2006gaussian}. The power exponential correlation function follows 
\begin{equation}
c(\mathbf x, \mathbf x')= \exp \left\{-\left(\frac{||\mathbf x-\mathbf x' ||}{\gamma }\right)^{\alpha}\right\}, 
\label{equ:pow_exp}
\end{equation}
with a range parameter $\gamma\in (0,+\infty)$ and roughness parameter $\alpha \in (0,2]$ 
When  $\alpha=2$, the kernel becomes the Gaussian kernel, where the sample path is infinitely differentiable. The roughness parameter of the kernel is often held fixed based on the smoothness of the process, and the range parameters are estimated from the data.  

The Mat{\'e}rn kernel follows 
\begin{equation}
c(\mathbf x, \mathbf x')=\frac{1}{2^{\alpha -1}\Gamma (\alpha )} \left( \frac{||\mathbf x- \mathbf x' ||}{\gamma } \right) ^{\alpha } \mathcal{K}_{\alpha } \left( \frac{||\mathbf x- \mathbf x' ||}{\gamma } \right), 
\end{equation}
where  $\mathcal{K}_{\alpha }$ is the modified Bessel function of the second kind with roughness parameter $\alpha$ and range parameter $\gamma$. The Mat{\'e}rn kernel has a closed-form expression when $\alpha=2M+1$ for $M\in \mathbb N$, and the sample path of the GP with Mat{\'e}rn kernel is $\lceil \alpha \rceil-1$ differentiable. When $\alpha=5/2$, for instance, the Mat{\'e}rn kernel follows: 
\begin{equation}
 c(\mathbf x, \mathbf x')=\left(1+\frac{\sqrt{5}||\mathbf x-\mathbf x' || }{\gamma}+ \frac{5||\mathbf x-\mathbf x' ||^2 }{3\gamma^2} \right)\exp\left( -\frac{\sqrt{5}||\mathbf x-\mathbf x' || }{\gamma}\right).
 \label{equ:matern_5_2}
 \end{equation}

Note that as each coordinate input of the computer model may have different scales, the stationary kernel is not flexible. A widely used anisotropic kernel  is the product kernel \citep{sacks1989design,Bayarri09,higdon2008computer}:
\[ K(\mathbf x,\mathbf x')=\sigma^2 \prod^{p}_{l=1}c_l(| x_{al}-x_{bl}|),\]
where $c_l(\cdot)$ is a correlation function of the output induced by the $l$th coordinate of the input. In the above expression, $c_l(\cdot)$  can be chosen as a power exponential correlation, Mat{\'e}rn correlation, or any other suitable correlation function. Note that the parameters in $c_l(\cdot)$  (such as the range parameter $\gamma_l$)  can be different for each $l$, and these parameters can be estimated by the maximum likelihood type of estimator \citep{bayarri2007framework,gu2018robust}, inducing a more flexible way to parameterize the correlation.

{\bf Maximum likelihood estimator}. The  process of computer model emulation often begins by selecting a set of inputs $\mathbf X=\{\mathbf x_1,...,\mathbf x_n\}$ from a space-filling design, such that the design points can evenly fill the input domain. Widely used random space filling designs include the Latin hypercube design and its extensions \citep{santner2003design},  Then we run simulator at these design inputs and obtain a set of numerical solutions, denoted as $\mathbf f_0=(f_0(\mathbf x_1),...,f_0(\mathbf x_n))^T$. These data will be used to estimated the model parameters, including the mean, variance and range parameters  $(\bm \theta, \sigma^2, \bm \gamma )$. Denote the mean basis $\mathbf H_X=(\mathbf h(\mathbf x_1)^T,..., \mathbf h(\mathbf x_n)^T)^T$, a $n\times q$ matrix of the basis functions.  
Differentiating the likelihood function with respect to the mean and variance parameters, we have a closed form expression of the maximum likelihood estimator (MLE) of mean and variance parameters:
\begin{align}
\hat {\bm \theta}&=\left(\mathbf H^T_X \mathbf {C}_{XX} \mathbf H_X\right)^{-1}\mathbf H^T_X \mathbf {C}^{-1}_{XX} \mathbf f_0 \label{equ:theta_hat} \\
\hat \sigma^2&=S^2_X/n \label{equ:sigma_2_hat}
\end{align}
 with $S^2_X=(\mathbf f_0- \mathbf H_X \hat{\bm \theta} )^T \mathbf {C}^{-1}_{XX}(\mathbf f_0- \mathbf H_X \hat{\bm \theta} )$
 and $\mathbf C_{XX}=\mathbf K_{XX}/\sigma^2$ being a correlation matrix with the diagonal entry being 1. 
Plugging the MLE of the mean and variance parameters into the likelihood function leads to profile likelihood of the range parameters in the kernel:
\begin{equation}
\mathcal L(\bm \gamma )\propto |{\mathbf C}_{XX}|^{-\frac{1}{2}} (S^2_X)^{-\frac{n}{2}}. 
\label{equ:profile_gamma}
\end{equation}
The range parameters $\bm \gamma$ are often estimated by numerically maximizing the natural logarithm of Equation (\ref{equ:profile_gamma}) based on a Newton algorithm \citep{nocedal1980updating}, since the closed form MLE expression may not exist. 

The MLE  is an efficient estimator of the parameters when the sample is large. When the number of available runs of a computer experiment is small, however, the MLE of the parameters of a GP emulator can be very unstable. Other estimators, such as the penalized MLE \citep{li2005analysis} and robust marginal posterior mode estimator \citep{gu2018robust}, were studied when the sample size is small. Besides, the predictive mean in equation (\ref{equ:m_star}) may be used to estimate the range parameters through cross-validation. However,  more runs may be needed than the MLE, as one needs to split the observations to estimate the parameters in a cross-validation approach.


{\bf Predictive distribution}.  
Suppose we are interested in predicting the model value at input $\mathbf x$ not run before. The joint distribution follows a multivariate normal distribution
    \[
 \left( {\begin{array}{*{20}{c}}
   f(\mathbf x) \\
   \mathbf f  \\
 \end{array} } \right)  \mid  \bm{ \hat \theta}, \hat \sigma^2, \bm { \hat \gamma} \sim {\mathcal{MN} } \left( \left(\begin{array}{*{20}{c}}
 \mathbf h(\mathbf x) \bm{\hat \theta} \\
   \mathbf H_X \bm{\hat \theta}  \\
 \end{array}
   \right), \,
  \left( {\begin{array}{*{20}{c}}
  K(\mathbf x,\mathbf x) &\mathbf K_{xX}  \\
 \mathbf K_{Xx}  & \mathbf K_{XX} \\
 \end{array} } \right) \right),
 \]
where 
$\mathbf  K_{xX}=\mathbf  K^T_{Xx}=(K(\mathbf x,  \mathbf x_1),...,K(\mathbf x_n,  \mathbf x_n))$, with the variance and range parameters plugged into the kernel function $K(\cdot,\cdot)$. After obtaining the observations $\mathbf f=\mathbf f_0=(f_0(\mathbf x_1),...,f_0(\mathbf x_n))^T$, by the conditional distribution of the multivariate normal,  the predictive distribution of the Gaussian process at any input $\mathbf x$ follows a normal distribution:
\begin{equation}
 \left(f(\mathbf x)\mid \mathbf f_0,  \bm{ \hat \theta}, \hat \sigma^2, \bm { \hat \gamma}\right) \sim \mathcal N(m^*(\mathbf x), \, K^*(\mathbf x, \mathbf x) ),
 \label{eq:pred_dist}
 \end{equation}
where predictive mean and covariance follows 
\begin{align}
m^*(\mathbf x)&= \mathbf h(\mathbf x) \bm{\hat \theta}+ \mathbf K_{xX}\mathbf K^{-1}_{XX}(\mathbf f_0- \mathbf H_X \bm{\hat \theta}  ) \label{equ:m_star} \\
K^*(\mathbf x, \mathbf x)&=K(\mathbf x, \mathbf x)-\mathbf K_{xX}\mathbf K^{-1}_{XX}\mathbf K_{Xx}. \label{equ:K_star}
\end{align}
The predictive mean in (\ref{eq:pred_dist}) is often used as a point estimator for predicting the value of the function {at} any $\mathbf x \in \mathcal X$.  The GP emulator has an internal assessment of the uncertainty, as the predictive variance and any quantile of the prediction can be computed by (\ref{eq:pred_dist}).

{\bf Interpolator}.  Note that if $\mathbf x=\mathbf x_i$, for any $i=1,2,...,n$, we have $\mathbf K_{xX}\mathbf K^{-1}_{XX}=\mathbf e^T_i$, where $\mathbf e_i$ is a vector with $1$ at the $i$th entry and 0 at other entry.
The predictive mean $m^*(\cdot)$ in (\ref{eq:pred_dist}) is an \textit{interpolator}, as if $\mathbf x=\mathbf x_i$, for any $i=1,2,...,n$, the predictive mean is exactly the same as  $f_0(\mathbf x_i)$:
\[m^*_f(\mathbf x)= h(\mathbf x_i) \bm{\hat \theta}+ \mathbf e^T_i (\mathbf f_0- \mathbf H_{X} \bm{\hat \theta} )=f_0(\mathbf x_i).\]
An interpolator is typically suitable when the computer model is deterministic and the numerical error from the computer model is very small.

\begin{figure}[t]
\centering
\begin{tabular}{cc}
\includegraphics[height=.4\textwidth,width=.5\textwidth ]{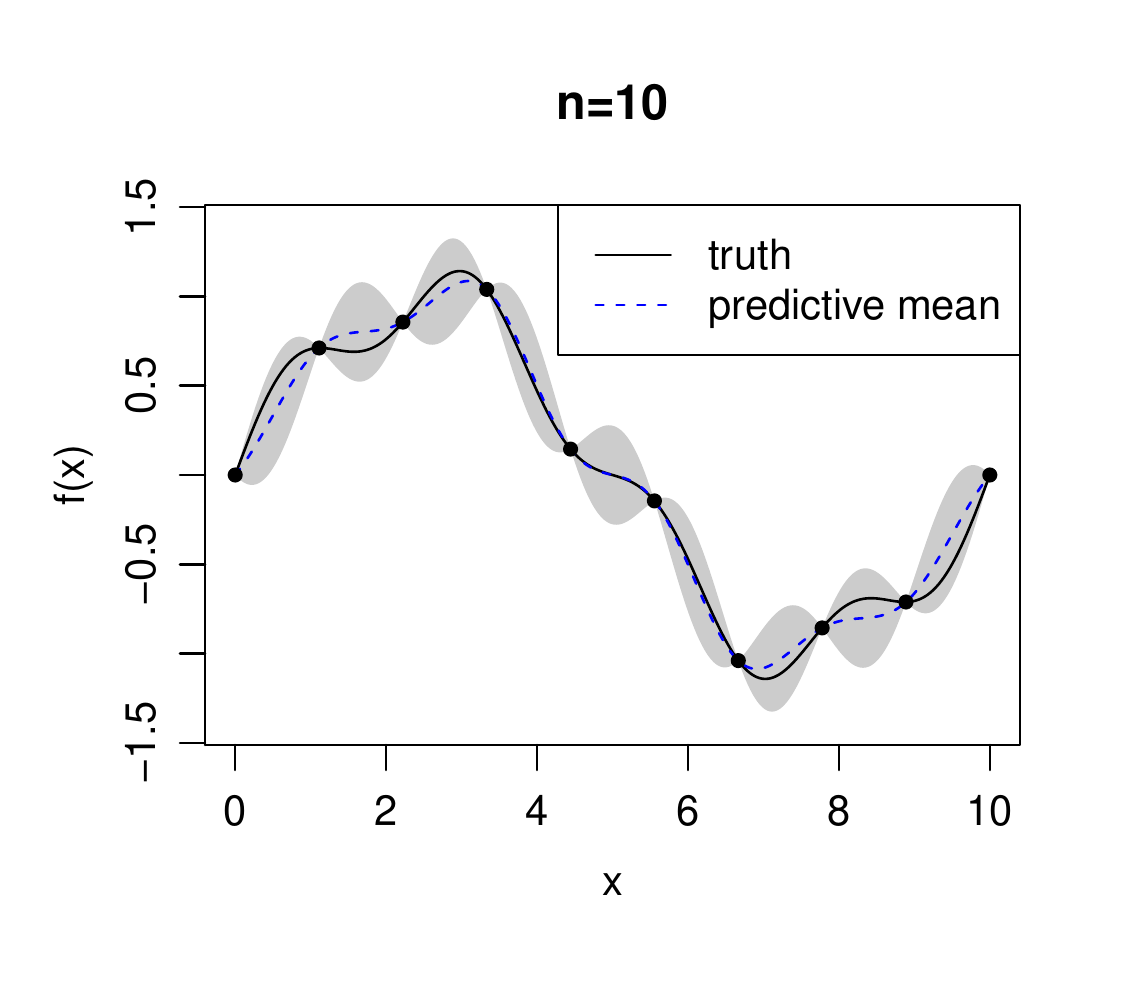}
\includegraphics[height=.4\textwidth,width=.5\textwidth]{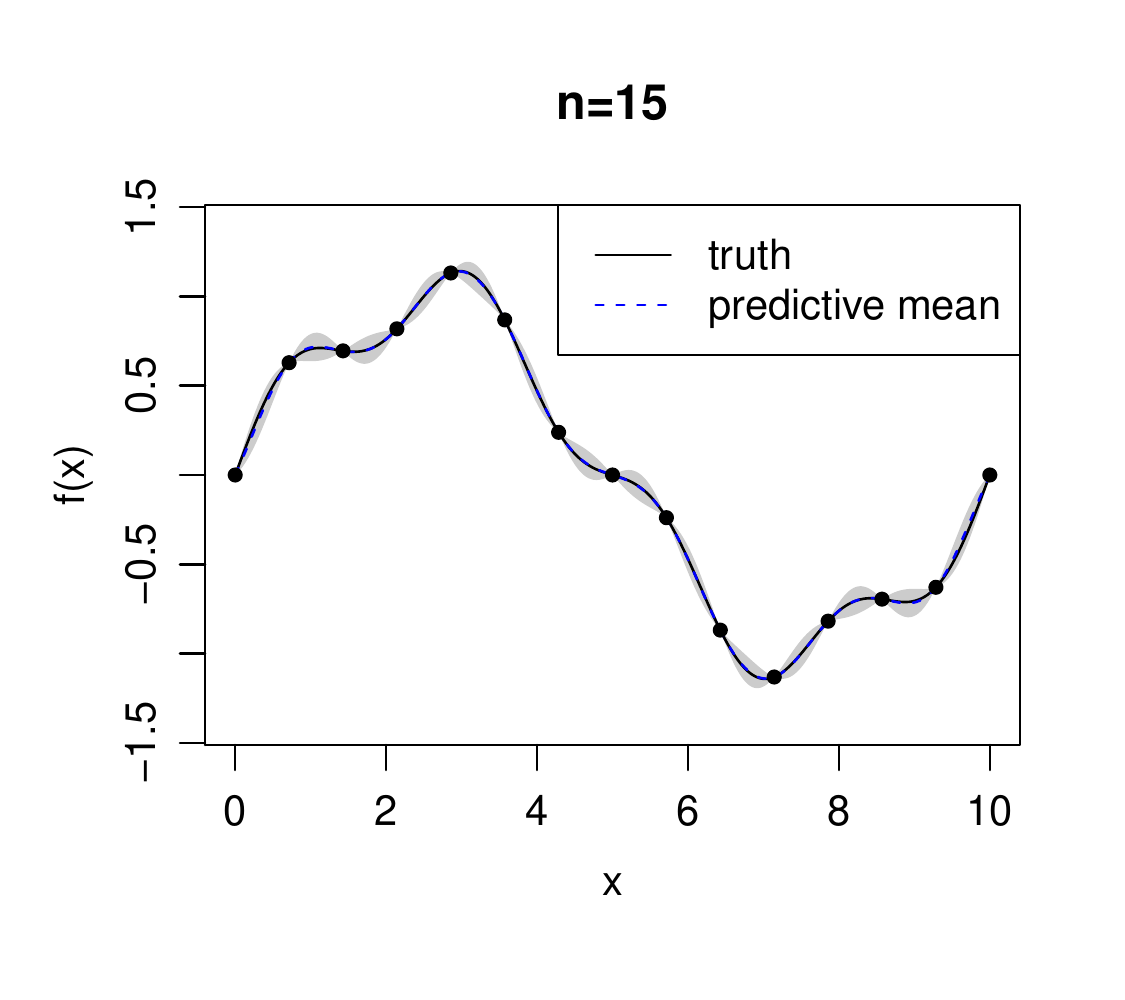} \vspace{-.2in}
\end{tabular}
\caption{Emulation of function $f(x)=\mbox{sin}(2 \pi x/10) + \mbox{sin}(2 \pi x/2.5)/5$  with the number of observations being $n=10$ and $n=15$, graphed in the left panel and the right panel, respectively. The black curves are the truth and the blue curves are the predictive mean of the GP emulator in equation \ref{equ:m_star} based on the observations graphed as the black dots. The grey area is the $95\%$ predictive interval by the GP emulator. }
\label{fig:prediction_1d}
\end{figure}

In Figure \ref{fig:prediction_1d}, we graph the predictive mean and $95\%$
 predictive interval of a GP emulator implemented in {\sf RobustGaSP} {\tt R} package \cite{gu2018robustgasp}  for function $f(x)=\mbox{sin}(2 \pi x/10) + \mbox{sin}(2 \pi x/2.5)/5$ with equal-spaced design at $x\in [0,1]$. We use the Mat{\'e}rn kernel in (\ref{equ:matern_5_2}) to parameterize the covariance and the MLE for estimating parameters. When the sample size increases, the estimation becomes more accurate, and the uncertainty (shown as the shaded area) is smaller. We only show an example with only input being 1 dimensional here, whereas the GP model implemented {\sf RobustGaSP} package is applicable for multi-dimensional input and output with both noise-free or noisy observations. 

\subsection{Gaussian process regression}
\label{subsec:gpr}
When the observations of the computer model 
contain noise (e.g. by non-negligible numerical error from the computer model), one can model the observations by 
\begin{equation}
y(\mathbf x)=f(\mathbf x)+\epsilon,
\label{equ:GP_noise}
\end{equation}
where  $f(\cdot) \sim \mbox{GP}(m(\cdot), \, K(\cdot,\cdot) )$, and $\epsilon$ is an  independent Gaussian noise with variance $\sigma^2_0$. 
The covariance function for the new process $y(\cdot)$ can be expressed as $ \tilde K(\mathbf x, \mathbf x')=   K(\mathbf x, \mathbf x') + \sigma^2_0\mathbbm 1_{x=x'}$, where $\mathbbm 1_{x=x'}=1$ if $\mathbf x= \mathbf x'$ otherwise $0$. The MLE  of $\bm \theta$ and $\sigma^2$ follow similarly in  (\ref{equ:theta_hat}) and  (\ref{equ:sigma_2_hat})  by replacing $\mathbf f_0$ and $C_{XX}$ by $\mathbf y$ and $\tilde C_{XX}:=C_{XX} +\eta \mathbf I_n$, respectively, where $\eta=\sigma^2_0/\sigma^2$ is referred as the nugget parameter. 


 Denote the observations $\mathbf {y}=(y(\mathbf x_1), ...,y(\mathbf x_n))^T$. The predictive distribution  at any input $\mathbf x$ with noisy observations also follows a normal distribution, $(f(\mathbf x)\mid \bm{ \hat \theta}, \hat \sigma^2, \bm { \hat \gamma} , {\hat \eta} )\sim \mathcal N(\tilde m^*(\mathbf x), \tilde K^*_y(\mathbf x, \mathbf x))$ with the predictive mean and variance below
\begin{align}
\tilde m^*(\mathbf x)&= \mathbf h(\mathbf x) \bm{ \tilde \theta} + \mathbf C_{xX}(\mathbf C_{XX}+\eta \mathbf I_n)^{-1} (\mathbf y- \mathbf H_X \bm{ \tilde \theta} ) \label{equ:m_star_y}\\
\tilde K^*_y(\mathbf x, \mathbf x)&=\hat \sigma^2 \left(C(\mathbf x, \mathbf x)-\mathbf C_{xX}(\mathbf C_{XX}+\eta \mathbf I_n)^{-1}\mathbf C_{Xx}\right). 
\label{equ:K_star_y}
\end{align}
where $\bm{ \tilde \theta}=(\mathbf H^T_X \mathbf {\tilde C}_{XX} \mathbf H_X)^{-1}\mathbf H^T_X \mathbf {\tilde C}^{-1}_{XX} \mathbf y$, $\mathbf C_{xX}=\mathbf K_{xX}/\hat \sigma^2 $ with the variance and range parameters plugged into the kernel function $K(\cdot,\cdot)$.
  

\subsection{Connection between Gaussian process regression and kernel ridge regression}
\label{subsec:krr}
{\bf Reproducing kernel Hilbert space}. 
We call $\mathcal H$ the reproducing kernel Hilbert space (RKHS) with the native norm (or RKHS norm) $||f||_{\mathcal H}=\sqrt{\langle f, f\rangle _{\mathcal H}}$, if there exists a kernel function $K: \mathcal X \times \mathcal X \to \mathbb R$, such that, 1) for any  $\mathbf x\in \mathcal X$, the function $K(\mathbf x, \mathbf x')$ as a function belongs to $ \mathcal H$, and 2) $K$ has the reproducing property: $\langle f(\cdot),K(\cdot, \mathbf x)\rangle_{\mathcal H}=f(\mathbf x)$ for any $f$ belongs to $\mathcal H$ \citep{rasmussen2006gaussian}.

For simplicity, let us consider a GP with zero mean (i.e. $m(\mathbf x)=0$ for any $\mathbf x$). The RKHS $\mathcal H$ attached to the GP with kernel $K(\cdot,\cdot)$ is the completion of the space of all functions: 
\[\mathcal H_0=\left\{f=  \sum^n_{i=1} w_i K(\mathbf x_i, \mathbf x), \quad w_1,...,w_n \in \mathbb R,\, \mathbf x_1,..., \mathbf x_n, \mathbf x \in \mathcal X, \, n\in \mathbb N\right\}
\]
with the inner product 
\[ \left\langle\sum^{n_1}_{i=1} w_i K(\mathbf x_i,  \cdot) , \sum^{n_2}_{j=1} w_j K(\mathbf x_j,  \cdot) \right\rangle_{\mathcal H}=\sum^{n_1}_{i=1}\sum^{n_2}_{j=1} w_i w_j K(\mathbf x_i, \mathbf x_j). \]
with $n_1, n_2 \in \mathbb N$.

We denote $\langle f, g \rangle_{L_2(\mathcal X)}= \int_{\mathbf x \in \mathcal X} f(\mathbf x)g(\mathbf x)d \mathbf x$ the inner product in $L_2(\mathcal X)$. The RKHS $\mathcal H$ contains all functions $f(\cdot)= \sum^\infty_{k=1} f_k \phi_k(\cdot) \in L_2(\mathcal X)$ with $f_k=\langle f, \, \phi_k \rangle_{L_2(\mathcal X)}$  and $\sum^\infty_{k=1}f^2_k/\lambda_k<\infty$. For any $g(\cdot)=\sum^\infty_{k=1} g_k \phi_k(\cdot) \in \mathcal H$ and $f(\cdot)= \sum^\infty_{k=1} f_k \phi_k(\cdot)$, the inner product can be represented as $ \langle f, g\rangle_{\mathcal H}=\sum^\infty_{k=1} f_k g_k/\lambda_k $. For more discussion on the RKHS, see Chapter 6 in \cite{rasmussen2006gaussian} and Chapter 1 of \cite{wahba1990spline}.


{\bf Kernel ridge regression}. Consider $n$ noisy observations $\mathbf y=(y_1,y_2,...,y_n)$ with $y_i=y(\mathbf x_i)$ for $i=1,2,...,n$. We are interested to estimate the mean $f(\mathbf x)=\E[y(\mathbf x)]$ of the observations for any $\mathbf x\in \mathcal X$. Denote $\mathcal H$ the RKHS attached to kernel $K(\cdot,\cdot)$.  The kernel ridge regression (KRR) solves the following optimization problem:
\begin{equation}
\hat f_n=\underset{g\in \mathcal H}{argmin} \left\{\frac{1}{n}(y_i-g(\mathbf x_i))^2+\lambda || g ||^2_{\mathcal H} \right\}
\label{equ:f_krr}
\end{equation}
where $\lambda$ is a regularization parameter typically estimated from data.

\begin{theorem}(Solution of KRR). The solution of Equation (\ref{equ:f_krr}) is unique and has the following expression:
\begin{equation}
\hat f_n(\mathbf x)= \sum^n_{i=1}\hat w_i C(\mathbf x, \mathbf x_i)= \mathbf C_{xX}(\mathbf C_{XX}+n\lambda \mathbf I_n)^{-1} \mathbf y
\label{equ:sol_krr}
\end{equation}
for any $\mathbf x \in \mathcal X$ with $\mathbf{\hat w}=(\hat w_1,...,\hat w_n)^T= (\mathbf C_{XX}+n\lambda \mathbf I_n)^{-1} \mathbf y$. 
\label{thm:sol_krr}
\end{theorem}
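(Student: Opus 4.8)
The plan is to reduce the infinite-dimensional minimization over $\mathcal H$ in \eqref{equ:f_krr} to a finite-dimensional quadratic problem by the standard orthogonal-decomposition (``representer'') argument, solve the finite-dimensional problem in closed form, and then obtain uniqueness from strict convexity. Throughout I write $C(\cdot,\cdot)$ for the reproducing kernel of $\mathcal H$, so that $\langle f(\cdot), C(\cdot,\mathbf x)\rangle_{\mathcal H}=f(\mathbf x)$ for all $f\in\mathcal H$.

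First I would fix the design inputs and set $\mathcal H_1=\mathrm{span}\{C(\cdot,\mathbf x_1),\dots,C(\cdot,\mathbf x_n)\}$, a finite-dimensional and hence closed subspace of $\mathcal H$. Any $g\in\mathcal H$ splits orthogonally as $g=g_\parallel+g_\perp$ with $g_\parallel\in\mathcal H_1$ and $\langle g_\perp, C(\cdot,\mathbf x_i)\rangle_{\mathcal H}=0$ for every $i$. By the reproducing property $g_\perp(\mathbf x_i)=0$, hence $g(\mathbf x_i)=g_\parallel(\mathbf x_i)$ and the sum-of-squares term in \eqref{equ:f_krr} is unchanged when $g$ is replaced by $g_\parallel$; by the Pythagorean identity $\|g\|_{\mathcal H}^2=\|g_\parallel\|_{\mathcal H}^2+\|g_\perp\|_{\mathcal H}^2\ge\|g_\parallel\|_{\mathcal H}^2$, with equality only if $g_\perp=0$. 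Therefore any minimizer must lie in $\mathcal H_1$, i.e.\ $\hat f_n=\sum_{i=1}^n\hat w_i\,C(\cdot,\mathbf x_i)$ for some $\hat{\mathbf w}\in\mathbb R^n$. I expect this to be the conceptual crux of the argument; the one point that needs care is that $\mathcal H_1$ is closed (automatic in finite dimensions), so that the orthogonal projection onto it is well defined.

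Next I would substitute $g_{\mathbf w}=\sum_i w_i\,C(\cdot,\mathbf x_i)$ into the objective. Using the reproducing property twice gives $g_{\mathbf w}(\mathbf x_j)=(\mathbf C_{XX}\mathbf w)_j$ and $\|g_{\mathbf w}\|_{\mathcal H}^2=\mathbf w^{T}\mathbf C_{XX}\mathbf w$, so the objective reduces to the finite-dimensional function $n^{-1}\|\mathbf y-\mathbf C_{XX}\mathbf w\|^2+\lambda\,\mathbf w^{T}\mathbf C_{XX}\mathbf w$ of $\mathbf w\in\mathbb R^n$. Setting its gradient in $\mathbf w$ to zero gives $\mathbf C_{XX}\bigl[(\mathbf C_{XX}+n\lambda\mathbf I_n)\mathbf w-\mathbf y\bigr]=0$, which is solved by $\hat{\mathbf w}=(\mathbf C_{XX}+n\lambda\mathbf I_n)^{-1}\mathbf y$; the inverse exists because $\mathbf C_{XX}$ is positive semidefinite and $\lambda>0$. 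Evaluating $\hat f_n$ at an arbitrary $\mathbf x$ then yields $\hat f_n(\mathbf x)=\mathbf C_{xX}\hat{\mathbf w}=\mathbf C_{xX}(\mathbf C_{XX}+n\lambda\mathbf I_n)^{-1}\mathbf y$, which is exactly \eqref{equ:sol_krr}.

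Finally, for uniqueness I would note that $g\mapsto n^{-1}\sum_{i=1}^n(y_i-g(\mathbf x_i))^2$ is convex, being a convex quadratic composed with the linear evaluation functionals $g\mapsto g(\mathbf x_i)$, while $g\mapsto\lambda\|g\|_{\mathcal H}^2$ is strictly convex for $\lambda>0$; hence the objective in \eqref{equ:f_krr} is strictly convex on $\mathcal H$ and has at most one minimizer. Combined with the explicit construction above, this proves existence and uniqueness of $\hat f_n$ as an element of $\mathcal H$ --- even in the degenerate case where $\mathbf C_{XX}$ is singular and the coefficient vector $\hat{\mathbf w}$ realizing $\hat f_n$ is itself not unique.
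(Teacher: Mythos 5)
Your proof is correct and follows essentially the same route as the paper's: reduce to the span of the kernel sections via the representer argument, then solve the resulting finite-dimensional quadratic by differentiation. You go further than the paper in two useful ways---you actually prove the representer step (the paper only cites it) and you supply the strict-convexity argument behind the uniqueness claim, correctly noting that $\hat f_n$ is unique as an element of $\mathcal H$ even when $\mathbf C_{XX}$ is singular and the coefficient vector $\hat{\mathbf w}$ is not.
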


\begin{proof}
By the representer lemma \citep{rasmussen2006gaussian,wahba1990spline}, for any $\bm \theta \in \bm \Theta$ and $\mathbf x \in \mathcal X$, one has 
\begin{equation*}
\hat f_n(\mathbf x)=\sum^n_{i=1}w_i K(\mathbf x_i, \mathbf x),
\end{equation*}
and denote $\mathbf w=(w_1,...,w_n)^T \in \mathbb R^{n}$ the weights in the solution. Since $\langle K(\mathbf x_i,\cdot), K(\mathbf x_j,\cdot)\rangle_{\mathcal H}= K(\mathbf x_i, \mathbf x_j)$, equation (\ref{equ:f_krr}) becomes to find  $\mathbf w$ such that  
\begin{equation}
\mathbf{\hat w}=\underset{\mathbf w\in \mathbb R^{n}}{argmin} \left\{ \frac{1}{n} (\mathbf y- \mathbf R \mathbf w )^T(\mathbf y -\mathbf R \mathbf w )+ \lambda \mathbf w^T \mathbf R \mathbf w\right\}. \label{equ:quadratic_w}
\end{equation}
Differentiating (\ref{equ:quadratic_w})  with regard to $\mathbf w_{\bm \theta}$,  we have
\begin{equation}
\mathbf {\hat w}= (\mathbf R+n\lambda \mathbf I_n)^{-1}\mathbf y. 
\label{equ:w_form}
\end{equation}
\end{proof}

\begin{remark}
The solution of KRR in (\ref{equ:sol_krr}) is exactly the same as the predictive mean estimator in equation (\ref{equ:m_star_y}) when the mean function is zero (i.e. $m(\mathbf x)=0$ for any $\mathbf x \in \mathcal X$) and $\lambda=\eta/n$. 
\label{remark:connection_KRR}
\end{remark}

The KRR solves the optimization problem in (\ref{equ:f_krr}) and gives an estimator of the function. As the noise is not modeled, the uncertainty of the KRR estimator is not specified. As stated in Remark \ref{remark:connection_KRR}, the solution of the KRR is equivalent to the predictive mean of GP regression in (\ref{equ:m_star_y}). One main advantage of the GP model is the uncertainty of the estimator can be computed based on the predictive distribution. 



Note that many simulators  may be deterministic or  may contain very small numerical error. In this scenario, the observations become  $y_i=f(\mathbf x_i)$ for $i=1,2,...,n$. The solution of KRR, however, may not be suitable for these scenarios, as it is not an interpolator. Consider the following kernel ``ridgeless" problem \citep{liang2020just}:

\begin{equation}
\hat f_n=\underset{g \in \mathcal H}{argmin} || g||_{\mathcal H}, \mbox{ subject to } g(x_i)= f(x_i), \mbox{ for } i=1,2,...,n
\label{equ:krl}
\end{equation}
The solution of equation (\ref{equ:krl}) follows \citep{kanagawa2018gaussian}:
\begin{equation}
\hat f_n=\mathbf K_{xX}\mathbf K_{XX}^{-1} \mathbf f
\label{equ:sol_krl}
\end{equation}
Note that the solution in equation (\ref{equ:sol_krl}) is exactly the same as the predictive mean expression in (\ref{equ:m_star}) with mean zero $m(\mathbf x)=0$. 


\subsection{Convergence rates}
\label{subsec:convergence}
GP regression is a flexible approach to approximate nonlinear continuous functions. We briefly introduce the convergence properties of GP regression to the true underlying function.  Suppose the observations are from
\begin{equation}
y(\mathbf x)=f_0(\mathbf x)+\epsilon,
\label{equ:npr}
\end{equation}
where $f_0(\mathbf x)$ is the true deterministic function with $\mathbf x \in \mathcal X$ and $\epsilon$ is an independent noise. Let us assume we evaluate the goodness of estimation by the $L_2$ norm: $||\hat  f_n- f_0 ||_{L_2}=(\int_{\mathbf x\in \mathcal X} (f(\mathbf x)- f_0(\mathbf x))^2 d\mathbf x)^{1/2}$, where $\hat  f_n$ is the  KRR estimator in (\ref{equ:sol_krr}) (or equivalently the predictive mean estimator of GP regression in (\ref{equ:m_star_y})). 

Loosely speaking, the convergence of KRR estimator depends on three regularity conditions. First  the noise $\epsilon$ should have a tail decreasing rate not slower than the Gaussian distribution (i.e. the sub-Gaussian distribution). Second the sequences of inputs $\{\mathbf x_i\}^{\infty}_{i=1}$ should fill the space $\mathcal X $. Third the number of {small balls} needed to  cover  the functional space should not be too large.  Denote the covering number $N(r, \mathcal F, ||\cdot ||_{L_2})$ the smallest value of $N$ for the functional space $\mathcal F$ over $\mathcal X$, such that there exists a series of $L_2$ integrable functions $\{f^L_1,...,f^L_N,f^U_1,...,f^U_N\}$ with $|| f^L_i(\cdot)-f^U_i(\cdot) ||\leq r$ and  $r>0$ for $i=1,...,N$, and for each $f\in \mathcal F$, one has $f^L_i\leq f\leq f^U_i$ for certain $1\leq i\leq N$. We refer to the book of empirical process for further discussion of the covering number \cite{van2000empirical,kosorok2008introduction}. 

 For simplicity, we assume the design follows $U([0,1]^p)$, a uniform distribution at $[0,1]^p$. Further denote $\mathcal F(\rho)=\{f\in \mathcal F, |f||_{\mathcal H} \leq \rho \}$.   We are ready to state the convergence theorem, which can be inferred by Theorem 10.2 from \cite{van2000empirical}.
 
\begin{theorem}
 Suppose the data are generated from equation (\ref{equ:npr}) with $f_0\in \mathcal F$. Suppose $\mathbf x_i \sim U([0,1]^p)$ the uniform distribution with domain $[0,1]^p$, and there exists a constant $K_0$ such that $\E_{\epsilon}[\exp(K_0 \epsilon)]< \infty$. Furthermore, there exists $\tau$, such that $log(N(r, \mathcal F(\rho), || \cdot||_{L_2} ))\lesssim \rho^{\tau} r^{-\tau}$, 
 for all $r, \rho>0$. When $\lambda^{-1}=O(n^{2/(2+\tau)})$, the $L_2$ norm of the difference between the estimator {$\hat f_n$} and truth underlying function $f_0$ is stochastically bounded by $\lambda^{1/2}$: 
 \[ || \hat f_n -f_0||_{L_2}=O_p(\lambda^{1/2}).\]
 \label{thm:convergence_krr}
 \end{theorem}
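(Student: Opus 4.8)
The plan is to regard $\hat f_n$ as the penalized least--squares estimator $\hat f_n=\underset{g\in\mathcal H}{\operatorname{argmin}}\ \frac1n\sum_{i=1}^n(y_i-g(\mathbf x_i))^2+\lambda\|g\|_{\mathcal H}^2$ (the KRR objective of Equation~(\ref{equ:f_krr})) and to verify the hypotheses of Theorem~10.2 of \cite{van2000empirical}; I outline the underlying empirical--process argument. \emph{Step 1 (basic inequality).} Evaluating the objective at $\hat f_n$ and at the competitor $f_0\in\mathcal F\subseteq\mathcal H$ and substituting $y_i=f_0(\mathbf x_i)+\epsilon_i$ yields
\begin{equation}
\|\hat f_n-f_0\|_n^2 + \lambda\|\hat f_n\|_{\mathcal H}^2 \;\le\; \frac{2}{n}\sum_{i=1}^n\epsilon_i\bigl(\hat f_n(\mathbf x_i)-f_0(\mathbf x_i)\bigr) + \lambda\|f_0\|_{\mathcal H}^2,
\label{equ:basic_ineq}
\end{equation}
where $\|g\|_n^2=\frac1n\sum_i g(\mathbf x_i)^2$. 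Everything then reduces to bounding the weighted empirical process $\nu_n(g):=\frac1n\sum_i\epsilon_i g(\mathbf x_i)$ uniformly over the data--dependent function $g=\hat f_n-f_0$, and to comparing $\|\cdot\|_n$ with $\|\cdot\|_{L_2}$.

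\emph{Step 2 (modulus of continuity).} Since $\hat f_n$ is random, a uniform bound is needed, obtained by the peeling device: slice the parameter space into dyadic shells in $\|\hat f_n-f_0\|_{\mathcal H}$ and in $\|\hat f_n-f_0\|_{L_2}$. On the shell $\{g:\|g\|_{\mathcal H}\le\rho,\ \|g\|_{L_2}\le\delta\}$ the entropy hypothesis $\log N(r,\mathcal F(\rho),\|\cdot\|_{L_2})\lesssim(\rho/r)^{\tau}$ makes Dudley's entropy integral $\int_0^\delta\rho^{\tau/2}r^{-\tau/2}\,dr\asymp\rho^{\tau/2}\delta^{1-\tau/2}$ converge precisely because $\tau<2$, and a chaining maximal inequality bounds $\sup|\nu_n(g)|$ on the shell by $n^{-1/2}\rho^{\tau/2}\delta^{1-\tau/2}$ up to logarithmic factors. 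Because the noise is only sub--exponential ($\E_\epsilon[\exp(K_0\epsilon)]<\infty$), the chaining is run with Bernstein's inequality after truncating the $\epsilon_i$, which contributes only a lower--order term; a union bound over the $O(\log n)$ shells preserves the estimate with high probability. The hypothesis $\mathbf x_i\sim U([0,1]^p)$, together with the same covering numbers applied to $\{g^2\}$, is used here to establish via a one--sided uniform law of large numbers that $\tfrac12\|g\|_{L_2}^2\le\|g\|_n^2\le 2\|g\|_{L_2}^2$ on the localized classes, so that the two norms may be interchanged.

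\emph{Step 3 (balancing and tuning).} Insert the Step~2 bound into (\ref{equ:basic_ineq}) and apply Young's inequality to $n^{-1/2}\rho^{\tau/2}\delta^{1-\tau/2}$ with exponents $4/(2+\tau)$ and $4/(2-\tau)$, decoupling it into $\tfrac14\|\hat f_n-f_0\|_{L_2}^2 + C\,n^{-2/(2+\tau)}\|\hat f_n-f_0\|_{\mathcal H}^{2\tau/(2+\tau)}$. Absorbing the first term on the left and using $\lambda^{-1}=O(n^{2/(2+\tau)})$, the RKHS--norm terms give $\|\hat f_n\|_{\mathcal H}^2\lesssim\|\hat f_n\|_{\mathcal H}^{2\tau/(2+\tau)}+1$, hence $\|\hat f_n\|_{\mathcal H}=O_p(1)$ since $2\tau/(2+\tau)<2$. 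Feeding $\|\hat f_n-f_0\|_{\mathcal H}=O_p(1)$ back in leaves $\|\hat f_n-f_0\|_{L_2}^2\lesssim n^{-2/(2+\tau)}+\lambda=O(\lambda)$, i.e.\ $\|\hat f_n-f_0\|_{L_2}=O_p(\lambda^{1/2})$, which is the assertion.

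\emph{Main obstacle.} The crux is Step~2 combined with the peeling in Step~3: the RKHS norm of $\hat f_n$ is not bounded a priori, so the empirical process, the truncation of the sub--exponential noise, and the empirical--to--population norm comparison must all be made uniform over an unbounded family of functions while keeping the failure probabilities summable. Pinning down the exponents $\rho^{\tau/2}\delta^{1-\tau/2}$ in the $\tau<2$ regime and matching $\sqrt n\,\|\hat f_n-f_0\|_{L_2}^2$ against them is exactly what fixes the rate $\lambda^{1/2}\asymp n^{-1/(2+\tau)}$; ensuring that the logarithmic factors and the Bernstein lower--order term are genuinely negligible under the stated tuning is the main technical bookkeeping.
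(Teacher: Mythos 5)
The paper gives no proof of this result; it simply remarks that the theorem ``can be inferred by Theorem 10.2 from \cite{van2000empirical},'' and your three-step sketch (basic inequality, chaining/peeling bound on the centered empirical process, and balancing via Young's inequality with $\lambda\asymp n^{-2/(2+\tau)}$) is a faithful and essentially correct reconstruction of exactly that argument. One small note worth recording: the quantity the paper calls $N(r,\mathcal F(\rho),\|\cdot\|_{L_2})$ is defined by brackets $f^L_i\le f\le f^U_i$, i.e.\ it is a bracketing number rather than an ordinary covering number, so the maximal inequality in your Step 2 should be the bracketing-entropy chaining bound (as in van de Geer) rather than Dudley's bound for plain covering numbers; this also streamlines the empirical-to-population norm comparison, since a bracketing condition on $\mathcal F(\rho)$ directly controls $\sup_g\bigl|\|g\|_n^2-\|g\|_{L_2}^2\bigr|$ without a separate entropy argument for $\{g^2\}$.
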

 Various conditions in Theorem \ref{thm:convergence_krr} can be relaxed. For example, the design space can be trivially extended to any bounded rectangle in $\mathbb R^p$ and the distribution of the design can also be modified to have the same convergence properties.

\begin{remark}
Various kernels and functional space satisfy the conditions.  E.g. for the Mat{\'e}rn kernel,  the RKHS is equivalent to the Sobolev space. Assuming $\mathcal X=[0,1]^p$, the natural logarithm of the covering number follows \cite{edmunds2008function,tuo2015efficient}:
 \[log(N(r, \mathcal F(\rho), || \cdot||_{L_2} ))\lesssim {\rho}^{p/\alpha}r^{-p/\alpha}. \]
where $\alpha$ is the roughness parameter and the Mat{\'e}rn kernel and with a constant $K'_1$.  Then if $f_0 \in \mathcal F$, where $\mathcal F$  is the Sobolev space and $\lambda^{-1}	\asymp n^{2\alpha/(2\alpha+p)}$ with $	\asymp$ denoting the same change of magnitude in both sides with respect to the change of $n$, we have the optimal convergence rate 
\[|| \hat f_n -f_0||_{L_2} =O_p(n^{-\frac{\alpha}{2\alpha+p}}).\]
\end{remark}

\subsection{Emulator of force in density functional theory}
\label{subsec:forece_emulation}
In \cite{chmiela2017machine}, the authors introduce  {a} \textit{gradient domain
learning (GDML) model}, based on the GP emulator of the force  with a constrained kernel constructed  by the relationship between energy and force. Denote $\mathbf x$ an descriptor of a molecule of $N$ atoms with position $(\mathbf r_1,\mathbf r_2,...,\mathbf r_N)$, where $\mathbf r_i \in \mathbb R^3$ for $i=1,2...,N$. In \cite{chmiela2017machine}, the descriptor is a $N^2$-dimensional real-valued vector  $\mathbf x =Vec(\mathbf D)$, where $Vec(.)$ is a vectorization operator and $\mathbf D$ is a $N\times N$ matrix with the $(i_1,i_2)$th entry of $\mathbf D$ being $D_{i_1,i_2}=||\mathbf r_{i_1}-\mathbf r_{i_2} ||^{-1}$ if $i_1>i_2$ and $0$ if $i_1\leq i_2$. Denote $f_E(\mathbf x)$ the total energy as a function of descriptor $\mathbf x$. The energy can be modeled as a GP {emulator}, meaning that for any set of descriptors $\{\mathbf x_1,\mathbf x_2,...,\mathbf x_n\}$, we have 
\begin{equation}
    \left((f_E(\mathbf x_1),...,f_E(\mathbf x_n))^T  \mid \mathbf m_X, \mathbf K_{XX}\right) \sim\mathcal{MN}( \mathbf m_X, \mathbf K_{XX} ),
    \label{equ:f_E_gp}
\end{equation} 
where $\mathbf m_X=(m(\mathbf x_1),...,(m(\mathbf x_n))^T$ is a vector of the mean and $\mathbf K_{XX}$ is an $n\times n$ covariance  matrix of energies with the $(l_1,l_2)$th term being $K(\mathbf x_{l_1}, \mathbf x_{l_2})$ for $l_1,l_2=1,..,n$. The isotropic Mat{\'e}rn kernel with roughness parameter being 2.5 in (\ref{equ:matern_5_2}) is used in \cite{chmiela2017machine}. 

Denote $\mathbf f_F(\mathbf r_1,\mathbf r_2,...,\mathbf r_N)$ the molecular force on atoms with positions  $\{\mathbf r_1,\mathbf r_2,...,\mathbf r_N\}$.
As the force must follow the conservation of energy, we have the following expression:
\begin{equation}
\mathbf f_F(\mathbf r_1,\mathbf r_2,...,\mathbf r_N)=-\nabla  f_E(\mathbf r_1,\mathbf r_2,...,\mathbf r_N),
\label{equ:f_F_conservation}
\end{equation}
where $\mathbf f_F(\mathbf r_1,\mathbf r_2,...,\mathbf r_N)=(f_{F,1},...,f_{F,3N})^T$ is a vector of $3N$ dimensions with $f_{F,3(i-1)+j}=\partial f_E(\mathbf r_1,\mathbf r_2,...,\mathbf r_N)/\partial r_{i,j}$ for $i=1,2,...,N$ and $j=1,2,3$. 

Since the gradient operator is a linear operator, {equations (\ref{equ:f_E_gp}) and} (\ref{equ:f_F_conservation}) imply that the marginal distribution force of any $n$ sets of molecules with descriptor $\{\mathbf x_1,\mathbf x_2,...,\mathbf x_n\}$ follows 
\begin{equation} 
\left((f_F(\mathbf x_1),...,f_F(\mathbf x_n))^T  \mid \mathbf m_X, \mathbf K_{XX} \right) \sim\mathcal MN( -\nabla \mathbf m_X, \nabla \mathbf K_{XX} \nabla^T),
\end{equation}
where $\nabla \mathbf m_X$ is a mean vector of $3Nn$ dimension with the $3(l-1)N+3(i-1)+j$th term being $\partial m_X/\partial r_{l,i,j}$, {and} $r_{l,i,j}$ being the $j$th coordinate of the $i$th atom at the $l$th molecule, for $i=1,2,...,N$, $j=1,2,3$ and $l=1,2,...,n$; $ \nabla \mathbf K_{XX} \nabla^T$ is $3Nn\times 3Nn$ covariance matrix with the $(3(l-1)N+ 3(i-1)+j,3(l'-1)N+3(i'-1)+j')$th term of $\nabla \mathbf K_{XX} \nabla^T$ being $\partial^2 K(\mathbf x_l, \mathbf x_{l'})/\partial r_{l,i,j}\partial r_{l',i',j'}$ for $i,i'=1,2,...,N$, $j,j'=1,2,3$ and $l,l'=1,2,...,n$. The matrix $\nabla \mathbf K_{XX} \nabla^T$ can be calculated using matrix derivative chain rule \cite{chmiela2017machine,petersen2008matrix}.

Without the loss of generality, assume the mean function is zero, i.e. $m(\mathbf x)=0$ for any descriptor $\mathbf x$. Denote $\mathbf f_F=(f_F(\mathbf x_1)^T,...,f_F(\mathbf x_n)^T)^T$ a $3Nn$ vector of training forces at $n$ sets of molecules. For a new molecule with any descriptor $\mathbf x$,  the predictive mean of the force on the atoms of this new molecule follows
\begin{align*}
\mathbb E[\mathbf f_F(\mathbf x) \mid \mathbf f_F]&= (\nabla \mathbf K_{xX} \nabla^T)^T (\nabla \mathbf K_{XX} \nabla^T)^{-1}\mathbf f_F \\
&=\sum^{n}_{l=1}\sum^N_{i=1}\sum^3_{j=1} \omega_{l,i,j} \nabla_{x} \frac{\partial_{x_l} K(\mathbf x,\mathbf x_l)}{\partial r_{l,i,j}  },
\end{align*}
where $\nabla \mathbf K_{xX} \nabla^T$ is a $3Nn\times 1$ vector, {with} the $3(l-1)N+3(i-1)+j$th term being $\partial_{x_l} K(\mathbf x, \mathbf x_l)/\partial r_{l,i,j}$, and $\omega_{l,i,j}$ is the  $3(l-1)N+3(i-1)+j$th term of the vector $(\nabla \mathbf K_{XX} \nabla^T)^{-1}\mathbf f_F$. {And the} $N\times N$ predictive covariance follows: 
\begin{align*}
{\mathbb{COV}}[\mathbf f_F(\mathbf x) \mid \mathbf f_F]&=\nabla K_{xx} \nabla^T-(\nabla \mathbf K_{xX} \nabla^T)^T (\nabla \mathbf K_{XX} \nabla^T)^{-1}\nabla \mathbf K_{xX} \nabla^T. 
\end{align*}


\begin{figure}[t]
\centering
\begin{tabular}{cc}
\includegraphics[height=.4\textwidth,width=.5\textwidth ]{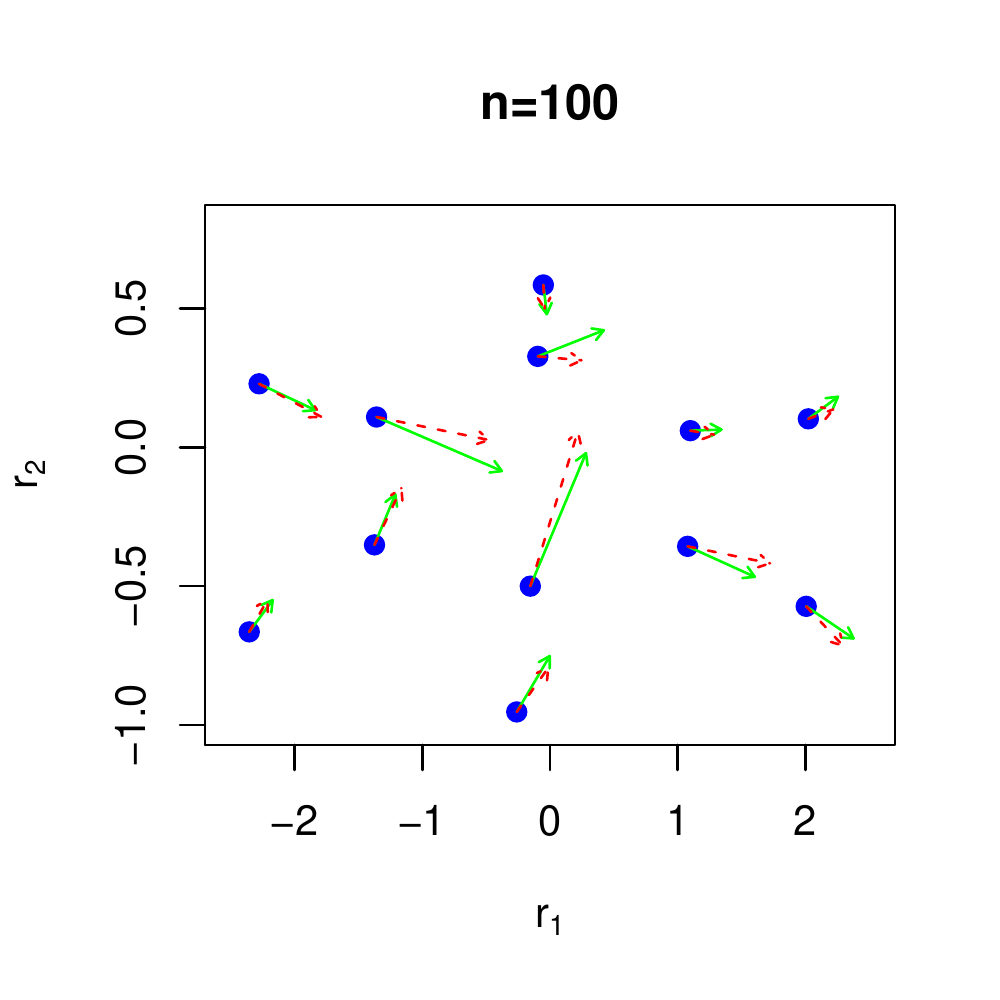}
\includegraphics[height=.4\textwidth,width=.5\textwidth]{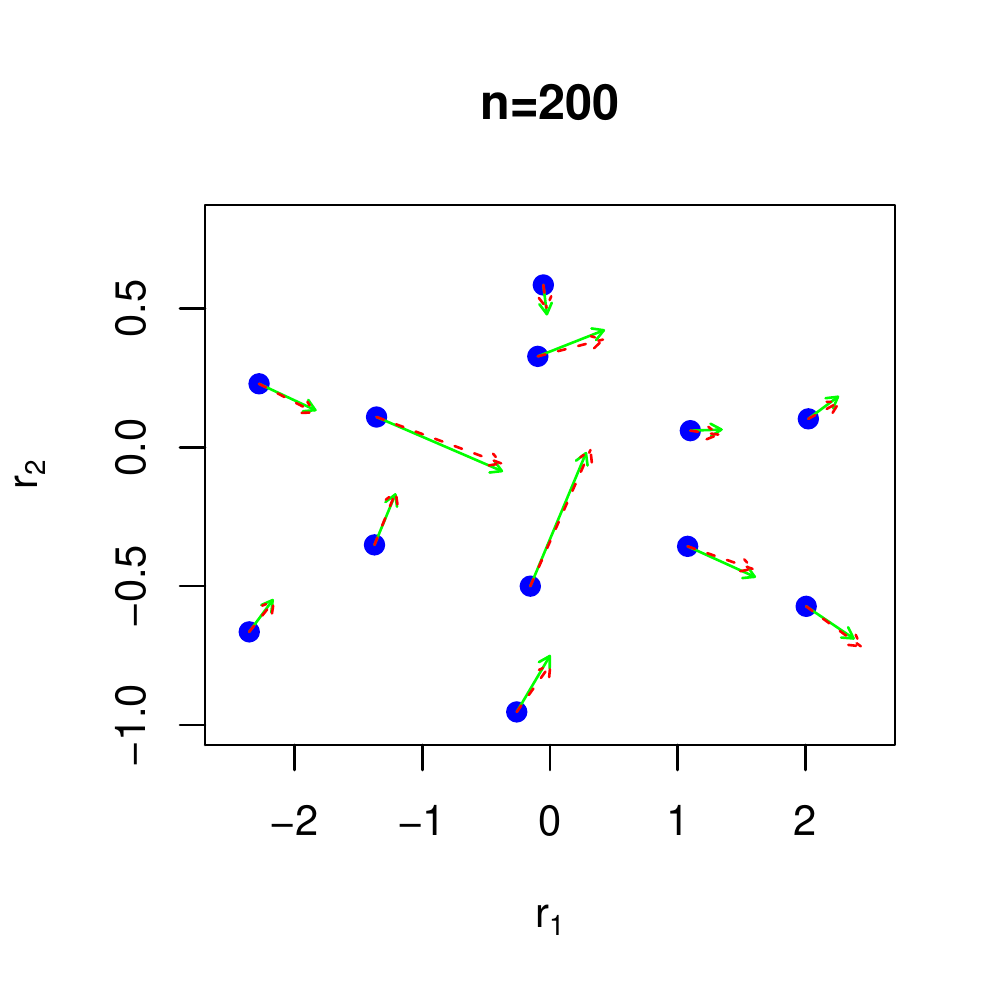} \vspace{-.2in}
\end{tabular}
\caption{Emulation of atomic force on Benzene projected on the first two dimensions, when the sample size is $n=100$ and $n=200$. The green solid arrows are the held-out test force and the red dashed  arrows are the prediction.}
\label{fig:emulation_force}
\end{figure}

Figure 4 shows the estimated force at the first two dimensions using $n=100$ and $n=200$ training data from \cite{chmiela2017machine}. When the number of observations increases, the predictions become more accurate. 

The GDML approach satisf{ies} \textit{translation} and \textit{rotation} symmetry (or invariance). The translational symmetry means the prediction of a physical quantity (such as force or energy) of any two molecules with positions $\{\mathbf r_1,...,\mathbf r_N\}$ and $\{\mathbf r_1+\mathbf h,...,\mathbf r_N+\mathbf h\}$ are the same for any real-valued vector $\mathbf h$.  The rotational symmetry means the prediction of a physical quantity remains the same when  all atoms rotate at the same angle with respect to an axis. Note that under these two operations,  the prediction of the force will not change as the descriptor of the molecule in the GDML approach remains the same.

The GDML approach does not comply with the \textit{permutation symmetry}, meaning that the  physical quantity of interest is invariant if we relabel the same atoms species. The descriptor of a molecule in the GDML approach changes after relabeling the atoms. An improved approach, called symmetrized gradient-domain machine learning (sGDML) \cite{chmiela2018towards}, seeks a permutation that minimizes the $L_2$ norm of two molecular graphs. 

For any {configuration with atomic  positions specified by} $(\mathbf r_1,...,\mathbf r_N)$, sGDML aims to find  a permutation $\tau$ to minimize the $L_2$ norm of the $N\times N$  distance matrix $\mathbf A$ with the $(i_1,i_2)$th term being $||\mathbf r_{i_1}-\mathbf r_{i_2}||$ for $i_i,i_2=1,2,...,N$. In other words, for any for two isomorphic molecular graphs with distance matrix $\mathbf A_G$ and $\mathbf A_H$, the permutation to align the matrix is estimated by $\hat \tau= \mbox{argmin}_{\tau}||\mathbf P(\tau) \mathbf A_G \mathbf P(\tau)^T-\mathbf A_H||$. The predictive performance of sGDML improves for most of the molecules considered in \cite{chmiela2018towards}.











\subsection{Emulator of energy in density function theory}
\label{subsec:application_GP_scalar}
Predicting the total energy of a system is one of the most important tasks {in first principles modeling}. In the previous GDML approach, the energy of a new molecul{ar} configuration with descriptor $\mathbf x$ can be predicted using the predictive mean $\mathbb E[f_E(\mathbf x)\mid f(\mathbf x_1),...,f(\mathbf x_n)]$ with $n$ training model runs.  Various other approaches based on the KRR estimator (or the predictive mean of Gaussian process regression) are developed in recent studies to predict the total energy of an atomic system from KS-DFT calculation.

Recent advances focus on developing new descriptors for emulating the energy. In \cite{rupp2012fast}, for instance, the descriptor of the energy of a molecule with atomic positions $(\mathbf r_1,...,\mathbf r_N)$ is specified as a pseudo Coulomb matrix $\mathbf D$, where $D_{i,j}=Z_iZ_j/||\mathbf r_i-\mathbf r_j||$ for $i\neq j$ and $D_{i,j}=0.5Z^{2.4}_i$ for $i,j=1,2,...,N$, where $Z_i$ is the nuclear 
 valence of the $i$th atom. For two molecules with descriptors $\mathbf D'$ and $\mathbf D'$, the Gaussian kernel was then used to parameterize the correlation with input $Vec(\mathbf D)$ and $Vec(\mathbf D')$. The predictive mean of the GP can be used to estimate the energy of a molecule with a new set of atom{ic} positions. 

Another recent development  is on the interatomic potential. For an atomic system, the total energy can be decomposed as \citep{bartok2015g}:
\begin{equation}
 f_E=\sum_{\alpha}\sum_{i\in \alpha}{f^{\alpha}_{E_i} } + \mbox{ long-range contributions}
 \label{equ:f_E}
 \end{equation}
where $f^{\alpha}_{E_i}$ is the local energy functionals of atom $i$ of the $\alpha$ type with compact support within a  radius $r_{cut} \in \mathbb R^+$. { The long-range contribution{s}  are referred to electrostatic, polarizability and van der Waals interactions.}


We model $f^{\alpha}_{E_i}$ as a Gaussian process  {based on inputs} representing the neighboring atomic structure. 
 A good representation of local atomic structure should be invariant to the permutational, rotational and translational symmetries, as discussed in Section \ref{subsec:forece_emulation}. Various descriptors as a function of the geometric and radial information of the neighboring atoms are developed in   \cite{bartok2013representing,bartok2010gaussian}, including power spectrum, bispectrum, radial basis and angular Fourier series. 
 After identifying the descriptor of atoms' neighboring features, the similarity between neighboring features may be measured by a kernel function.
 In \cite{bartok2018machine}, for example, the neighbor density of atom $i$ is represented as a summation of the Gaussian function 
  \begin{equation}
 \rho_i(\mathbf r)= \sum_{i'}f_{cut}(\mathbf r_{ii'}) \exp\left(-\frac{|| \mathbf  r-  \mathbf r_{ii'}||^2}{2\gamma_{atom}}\right), 
 \label{equ:rho_i_descriptor}
 \end{equation}
 where the summation is on the neighbor $i'$ atoms  including the atom $i$ itself, $\gamma_{atom}$ is a fixed range parameter and $f_{cut}(\cdot)$ is a cut-off function continuously decreases to zero beyond a cutoff radius.  The Smooth Overlap of Atomic Positions (SOAP) kernel developed in \cite{bartok2013representing} was used in \cite{bartok2018machine} to parameterize the covariance between the neighbor features of atom $i$ and $j$, denoted as $\mathbf x_i$ and $\mathbf x_j$:
 \[K( \mathbf x_i, \mathbf x_j)=\sigma^2\left|\frac{\tilde K(\mathbf x_i, \mathbf x_j)}{\sqrt{\tilde K( \mathbf x_i,  \mathbf x_i)\tilde K(\mathbf x_j, \mathbf x_j) }}\right|^{\xi}\]
 where  $\tilde K$ is defined by first integrating the square of the neighbor densities product and then  integrating over all possible 3D rotations:
 \begin{equation}
 \tilde K(\mathbf x_i, \mathbf x_j)=\int_{\mathbf{ R} \in \mbox{SO}(3)} d \mathbf { R}  \left|\int_{\mathbf r \in \mathbb R^3}  d \mathbf r  \rho_i(\mathbf r)\rho_j(\mathbf{ R} \mathbf r) \right|^{2} 
 \label{equ:K_tilde_SOAP}
 \end{equation}
 with $\mathbf R$ being a three dimensional rotation matrix in the 3D rotation group (often denoted as SO(3)), $\sigma^2$ being the variance parameter, $\xi=4$ used in \cite{bartok2018machine}. 
 As the right hand side of equation (\ref{equ:K_tilde_SOAP}) may not have a close form expression, one often requires numerical expansion.

 Assuming the total energy is normalized, so the mean {is} zero. Based on Equation (\ref{equ:f_E}) with long range correlation near zero, the covariance of the total energies $ f_{E_{a}} $ and  $f_{E_{b}}$  of two sets of atoms, denoted as $a$ and $b$,  can be computed by
 \[ \E[ f_{E_{a}} f_{E_{b}} ]= \sigma^2 \sum_{i \in \mbox{set a} } \sum_{j \in \mbox{set b}  } K(\mathbf x_i, \mathbf x_j),  \] 
as the mean is assumed to be zero. The covariance of other quantities such as force can be computed by the derivatives of the kernel functions.

The GP model with SOAP kernel achieved accurate predictive performance for  silicon clusters and the
bulk crystal  \cite{bartok2013representing,bartok2018machine}. The development of interatomic potential is ambitious, as it   allows one to use GP regression to compute the predictive distribution of the energies of possibly any set of the atoms based on the proximity of this atom set and the training atom sets. 

\section{Gaussian process emulator of vector-valued functions} 
\label{sec:GP_vector}
One important quantity in the KS-DFT {calculation} is the electron density, based on which one can compute other quantities. Various approaches are developed to emulate the electron {density}. For example, the electron density is emulated based on the Gaussian potential functions \citep{brockherde2017bypassing}. Unlike the energies and forces, the electron density is typically represented as a vector output in the Cartesian coordinate or coefficients in Fourier basis. Emulating a vector-valued function by Gaussian processes has been studies in recent year. We briefly review these approach{es} in this section. We denote the input (i.e. a descriptor function of a  set of atom{ic} positions) as $\mathbf x \in \mathbb R^{p}$ and the electron density $\bm  \rho(\mathbf x)=[\rho_1(\mathbf x),...,\rho_k(\mathbf x)]^T $ at $k$ spatial grids. 



{\bf Many single GP emulators}. The simplest approach is to model electron density at each grid independently by a GP emulator, 
$\rho_j(\cdot) \sim \mbox{GP}(m_j(\cdot),  K_j(\cdot, \cdot))$ by a mean function $m_j(\cdot)$ and covariance function $ K_{j}(\cdot,\cdot)$. The parameters of each GP emulator  can be estimated based on maximum likelihood estimator separately for each grid $j$.  The predictive distribution of any new atom set having descriptor $\mathbf x$ of any grid $j$ can be computed by the predictive distribution $(\rho_j(\mathbf x) \mid \rho_j(\mathbf x_1),...,\rho_j(\mathbf x_n), \bm \hat \beta_j, \hat \sigma^2_j, \bm {\hat \gamma}_j ) \sim \mathcal N(m^*_j(\mathbf x), \, K^*_j(\mathbf x, \mathbf x) )$, where  $(\bm \hat \beta_j, \hat \sigma^2_j, \bm {\hat \gamma}_j )$ are the estimated mean, variance and kernel parameters, respectively; $m^*_j(\cdot)$ and $K^*_j(\cdot,\cdot)$ follow equation (\ref{equ:m_star}) and (\ref{equ:K_star}), respectively, by plugging the estimated parameters $(\bm \hat \beta_j, \hat \sigma^2_j, \bm {\hat \gamma}_j )$ for grid $j$. We call this approach many single (MS) GP emulators.
 
The computational {cost} of MS GP emulators is at the order of $O(Kn^3)$, which could be  when {the number of grids or the number of training atom sets are} large. Besides, the parameters of each local GP emulator {are} estimated based on the data at each grid, which could be unstable.  

{\bf Separable GP emulator}. Noting that the output at two neighboring spatial grids is positively correlated, whereas the correlation is not exploited in the MS GP emulator. Another approach is to assume a separable GP emulator, such that $ (\rho(\mathbf x_1),...,\rho(\mathbf x_n)) \sim \mathcal{MN}(\mathbf M,  \mathbf K_{SS} \otimes \mathbf K_{XX}  )$, where $\mathbf M$ is a $k\times n$ mean matrix, $\mathbf K_{SS}$ is the covariance of spatial inputs,   $\mathbf K_{XX}$ is the covariance matrix of descriptor, and ``$\otimes$" denotes the Kronecker product. Here the covariance of data is separately modeled by a spatial covariance matrix $\mathbf K_{SS}$ and a covariance matrix of the descriptor  $\mathbf K_{XX}$. Conditional on the parameters, the predicted distribution also follows a normal distribution with mean and variance in closed-form expression \cite{wang2009bayesian}. 

When the number of spatial grids is smaller than the number of training model runs (i.e. $k<n$), a conjugate prior distribution of $ \mathbf K_{SS} $ can be specified as an inverse-Wishart distribution \citep{conti2010bayesian},  and $\mathbf K_{SS}$ can be integrated out when computing the predictive distribution. However, for a 3D electron density, the number of grids is  typically larger than the number of model runs in the training data. In this scenario, $\mathbf K_{SS}$ may be parameterized by a kernel function, where the spatial coordinate is used as the input. The computational operations of the Separable GP is $O(k^3)+O(n^3)$ in general, which is daunting for even a moderate number of grid size (e.g. $k=50^3$) for emulating the 3D electron density.  



{\bf Parallel partial GP emulator}. One computationally feasible approach is the  parallel partial Gaussian process (PP GP) emulator \cite{Gu2016PPGaSP}. In this model,  we assume the output density at grid $j$ follows $\rho_j(\cdot) \sim \mbox{GP}(m_j(\cdot),  \sigma^2_jC(\cdot, \cdot))$, which has different mean functions, different variance parameter{s} and a shared kernel function for the density at each spatial grid. Noting that {the} maximum likelihood estimator of the mean parameters and variance parameters has a closed-form expression, whereas the parameters in kernel function do not. Since the kernel function is shared across spatial grids, we only need to numerically estimate a few kernel parameters, more stable than the MS GP emulator. 

The computation complexity of PP GP emulator is $O(n^3)+O(kn^2)$ for $n$ training electron densities on  $k$ spatial grids, which is more efficient than the MS GP emulator and the separable GP emulator. The linear computational complexity with respect to $k$ allows PP GP emulator to emulate densities on a large number of grids. Even though the computational complexity of the PP GP is much smaller than the separable GP emulator, as shown in \cite{Gu2016PPGaSP}, the predictive mean of the PP GP emulator is exactly the same as the separable GP emulator, and the predictive variance between the PP GP emulator and separable GP emulator is similar. 

{\bf Semiparametric latent factor model}.  
We introduce a useful class of the linear model of coregionalization, called semiparametric latent factor model \cite{seeger2005semiparametric} for modeling the $k$-dimensional electron density at $k$ grids:
\begin{equation}
\bm  \rho(\mathbf x)=\mathbf A \mathbf z(\mathbf x)+\bm \epsilon,
\label{equ:LMC}
\end{equation}
where $ z(\mathbf x)=[z_1(\mathbf x),...,z_d(\mathbf x)]^T$  with $z_l(\cdot) \sim \mbox{GP}(m_l(\cdot), K_l(\cdot,\cdot)) $ follows a GP independently for $l=1,...,d$; $\mathbf A$ is a $k\times d$ latent factor loading matrix that relates the factor to the observations and $\bm \epsilon $ is vector of independent Gaussian noises. 

The latent factor loading matrix may be estimated by the principal component analysis, where the linear subspace is shown to be equivalent to maximum marginal likelihood estimator  (MMLE) of factor loadings when the each factor is independent, where each factor in model (\ref{equ:LMC}) follows a GP. Denote the $k\times n$ observation matrix $ {\mathbf {P}}=[\bm \rho(\mathbf x_1),...,\bm \rho(\mathbf x_n)]$ for $n$ training electron densities at $k$ spatial grids, and let  $\mathbf Z=[\mathbf z(\mathbf x_1),...,\mathbf z(\mathbf x_n)]$ be the $d\times n$ factor loading matrix. The  MMLE for latent factor loading matrix is stated in the following theorem:

	\begin{theorem}
	For model (\ref{equ:LMC}), assume $\mathbf A^T \mathbf A=\mathbf I_d$, after marginalizing out $\mathbf Z$,
	\begin{itemize}
	\item if $\bm \Sigma_1=...=\bm \Sigma_d=\bm \Sigma$. the marginal  likelihood  is maximized when  
	\begin{equation}
	 \hat {\mathbf A}=\mathbf U \mathbf R,
	\label{equ:A_est_shared_cov}
	\end{equation}
	 where $\mathbf U$ is a $k \times d$ matrix of the first $d$ principal eigenvectors of
	\begin{equation}
	\mathbf G={\mathbf  (\sigma^2_0 \bm \Sigma^{-1}+  \mathbf I_{n} )^{-1}  {\mathbf {P}}^T}, 
	\label{equ:G}
	\end{equation}
	and $\mathbf R$ is an arbitrary $d \times d$ orthogonal rotation matrix;
	\item If $\bm \Sigma_i\neq \bm \Sigma_j$ for any $i\neq j$, 
	denoting $\mathbf G_l= { {\mathbf {P}}(\sigma^2_0 \bm \Sigma^{-1}_l+\mathbf I_{n} )^{-1}{\mathbf {P}}^T}$, the maximum marginal likelihood estimator  is 
	     	\begin{equation}
\mathbf {\hat A}= \mbox{argmax}_{\mathbf A} \sum^d_{l=1}  {\mathbf a^T_l \mathbf G_l \mathbf a_l},  \quad \text{s.t.} \quad \mathbf A^T \mathbf A=\mathbf I_d,\vspace{-.2in}
	\label{equ:A_est_diff_cov}
	\end{equation}
	\end{itemize}
		\label{thm:est_A}
		\end{theorem}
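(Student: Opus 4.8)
\emph{Proof strategy.} The plan is to write the marginal likelihood of $\mathbf P$ in closed form, exploit the constraint $\mathbf A^T\mathbf A=\mathbf I_d$ to rotate the data so that the $d$ ``signal'' channels decouple from pure noise, and then recognize the resulting optimization over $\mathbf A$ as a trace (Ky~Fan) maximization in the common-covariance case and as a Stiefel-manifold program in general. First I would set up the likelihood: writing $\mathbf P=\mathbf A\mathbf Z+\mathbf E$, where $\mathbf E$ has independent $\mathcal N(0,\sigma^2_0)$ entries and the $l$th row of $\mathbf Z$ is $\mathcal N(\mathbf 0,\bm\Sigma_l)$ independently (the factor means are taken zero as in the stated model; a nonzero trend is profiled out in the usual way), vectorization gives $\mathrm{vec}(\mathbf P)\sim\mathcal N(\mathbf 0,\bm\Omega)$ with $\bm\Omega=\sum_{l=1}^d\bm\Sigma_l\otimes\mathbf a_l\mathbf a^T_l+\sigma^2_0\mathbf I_{kn}$, so that minus twice the log marginal likelihood equals $\log\det\bm\Omega+\mathrm{vec}(\mathbf P)^T\bm\Omega^{-1}\mathrm{vec}(\mathbf P)$ up to a constant.

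The central step is a change of basis. Since $\mathbf A^T\mathbf A=\mathbf I_d$, complete $\mathbf A$ to an orthogonal matrix $\mathbf O=[\,\mathbf A\ \ \mathbf A_\perp\,]$ and put $\mathbf Q=\mathbf O^T\mathbf P$ (write $\mathbf q_l$ for its $l$th row, as a column vector); this is volume preserving, and because $\mathbf O^T\mathbf a_l=\mathbf e_l$ it turns the covariance of $\mathrm{vec}(\mathbf Q)$ into $\sum_{l=1}^d\bm\Sigma_l\otimes\mathbf e_l\mathbf e^T_l+\sigma^2_0\mathbf I_{kn}$, which is block diagonal across the $k$ rows of $\mathbf Q$: the row $\mathbf q_l=\mathbf P^T\mathbf a_l$ for $l\le d$ has covariance $\bm\Sigma_l+\sigma^2_0\mathbf I_n$, while each row $l>d$ is pure noise with covariance $\sigma^2_0\mathbf I_n$. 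Consequently $\log\det\bm\Omega=\sum_{l\le d}\log\det(\bm\Sigma_l+\sigma^2_0\mathbf I_n)+(k-d)n\log\sigma^2_0$ does not depend on $\mathbf A$, and the quadratic form splits row by row; using $\sum_{l>d}\|\mathbf q_l\|^2=\|\mathbf P\|^2_F-\sum_{l\le d}\mathbf a^T_l\mathbf P\mathbf P^T\mathbf a_l$ one finds, after discarding $\mathbf A$-free constants, that maximizing the marginal likelihood over $\mathbf A$ is equivalent to maximizing $\sum_{l=1}^d\mathbf a^T_l\mathbf G_l\mathbf a_l$ subject to $\mathbf A^T\mathbf A=\mathbf I_d$, where $\mathbf G_l=\mathbf P\bigl(\sigma^{-2}_0\mathbf I_n-(\bm\Sigma_l+\sigma^2_0\mathbf I_n)^{-1}\bigr)\mathbf P^T$. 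The rearrangement $\sigma^{-2}_0\mathbf I_n-(\bm\Sigma_l+\sigma^2_0\mathbf I_n)^{-1}=\sigma^{-2}_0(\sigma^2_0\bm\Sigma^{-1}_l+\mathbf I_n)^{-1}$ rewrites $\mathbf G_l$, up to the common positive scalar $\sigma^{-2}_0$, as $\mathbf P(\sigma^2_0\bm\Sigma^{-1}_l+\mathbf I_n)^{-1}\mathbf P^T$, which is positive semidefinite, and the scalar is irrelevant to the maximizer.

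It then remains to solve $\max\sum_{l=1}^d\mathbf a^T_l\mathbf G_l\mathbf a_l$ over the Stiefel manifold $\mathbf A^T\mathbf A=\mathbf I_d$. If $\bm\Sigma_1=\cdots=\bm\Sigma_d=\bm\Sigma$, all $\mathbf G_l$ coincide with $\mathbf G=\mathbf P(\sigma^2_0\bm\Sigma^{-1}+\mathbf I_n)^{-1}\mathbf P^T$ and the objective becomes $\mathrm{tr}(\mathbf A^T\mathbf G\mathbf A)$; by the Ky~Fan maximum principle its maximum equals the sum of the $d$ largest eigenvalues of $\mathbf G$ and is attained precisely when the columns of $\mathbf A$ span the corresponding top eigenspace, i.e.\ $\hat{\mathbf A}=\mathbf U\mathbf R$ with $\mathbf U$ the leading $d$ eigenvectors of $\mathbf G$ and $\mathbf R$ an arbitrary $d\times d$ orthogonal matrix, the rotational freedom being exactly the non-identifiability of $\mathbf A$ built into the model. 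If the $\bm\Sigma_l$ are pairwise distinct, the objective is a sum of genuinely different quadratic forms with no simultaneous diagonalization, so there is no closed form and the maximizer is characterized only implicitly as $\hat{\mathbf A}=\mathrm{argmax}_{\mathbf A^T\mathbf A=\mathbf I_d}\sum_l\mathbf a^T_l\mathbf G_l\mathbf a_l$, which is the assertion; one may note this is a smooth program over the Stiefel manifold solvable numerically.

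The main obstacle is the orthogonal completion $\mathbf O=[\,\mathbf A\ \ \mathbf A_\perp\,]$ together with the realization that in this basis both the log-determinant term and the $k-d$ noise rows become free of $\mathbf A$, so that the entire $\mathbf A$-dependence condenses into the $d$ signal quadratic forms; carrying out the same reduction directly on $\log\det\bm\Omega$ and $\bm\Omega^{-1}$ is cumbersome and hides the structure. A secondary point needing care is the constraint bookkeeping: one must verify that the Ky~Fan bound is actually attained rather than merely an upper bound, and that the residual symmetry of the maximizer is exactly the orthogonal group $O(d)$, so that $\hat{\mathbf A}=\mathbf U\mathbf R$ describes the full solution set rather than a single representative.
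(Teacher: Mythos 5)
Your proof is correct and complete. The paper itself does not reproduce a proof of Theorem \ref{thm:est_A} (it defers to the cited reference), but your route --- marginalizing $\mathbf Z$ to get $\mathrm{vec}(\mathbf P)\sim\mathcal N(\mathbf 0,\sum_l\bm\Sigma_l\otimes\mathbf a_l\mathbf a_l^T+\sigma^2_0\mathbf I_{kn})$, rotating by an orthogonal completion of $\mathbf A$ so the log-determinant and the $k-d$ noise rows drop out, applying the identity $\sigma^{-2}_0\mathbf I_n-(\bm\Sigma_l+\sigma^2_0\mathbf I_n)^{-1}=\sigma^{-2}_0(\sigma^2_0\bm\Sigma_l^{-1}+\mathbf I_n)^{-1}$, and finishing with Ky Fan in the shared-covariance case --- is the standard derivation and is essentially equivalent to the Woodbury-based argument in the reference. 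Your derivation also implicitly corrects a typo in the statement: as displayed, $\mathbf G=(\sigma^2_0\bm\Sigma^{-1}+\mathbf I_n)^{-1}\mathbf P^T$ is an $n\times k$ matrix whose ``principal eigenvectors'' are not defined; the consistent form (matching the $\mathbf G_l$ of the second bullet) is $\mathbf G=\mathbf P(\sigma^2_0\bm\Sigma^{-1}+\mathbf I_n)^{-1}\mathbf P^T$, which is exactly what your reduction produces.
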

	The proof of Theorem \ref{thm:est_A} along with the parameter estimation  can be found in \cite{gu2018generalized}. 
	
	
Denote $(\bm {\hat \gamma}, \bm {\hat \sigma}^2, { \hat \sigma}^2_0)$ the estimated kernel parameters, signal variance and noise variance parameters.   {Assume $\mathbf A^T \mathbf A=\mathbf I_d$, after marginalizing out $\mathbf Z$, for any $\mathbf x$, one has the predictive distribution 
 \begin{equation}
 \bm \rho(\mathbf x) \mid {\mathbf {P}}, \mathbf {\hat A},  \bm {\hat \gamma}, \bm {\hat \sigma}^2, { \hat \sigma}^2_0 \sim \mathcal{MN} \left(\bm {\hat \mu}^*(\mathbf x), \bm {\hat \Sigma}^*(\mathbf x) \right),
 \label{equ:pred_gppca}
 \end{equation}
 where the predictive mean follows
 \begin{equation}
 \bm {\hat \mu}^*(\mathbf x)=\mathbf {\hat A}  \mathbf {\hat z}(\mathbf x),
 \label{equ:hat_mu}
 \end{equation} 
 with $\mathbf {\hat z}(\mathbf x)=( {\hat z}_1(\mathbf x),...,{\hat z}_d(\mathbf x) )^T $, with ${\hat z}_l(\mathbf x)=\bm {\hat \Sigma}^T_l(\mathbf x) ({\bm {\hat \Sigma}_l+\hat \sigma^2_0 \mathbf I_n })^{-1}{\mathbf {P}}^T \mathbf {\hat a}_l$, $\bm {\hat \Sigma}_l(\mathbf x)=\hat \sigma^2_l(\hat C_l(\mathbf x_1, \mathbf x),...,\hat C_l(\mathbf x_n, \mathbf x))^T$ for $l=1,...,d$;  the predictive variance follows 
  \begin{equation}
\bm {\hat \Sigma}^*(\mathbf x)=  \mathbf {\hat A} \mathbf {\hat D}(\mathbf x) \mathbf {\hat A}^T+ \hat \sigma^2_0(\mathbf I_k - \mathbf {\hat A} \mathbf {\hat A}^T),
 \label{equ:hat_Sigma}
 \end{equation} 
 with $\mathbf {\hat D}(\mathbf x) $ being a diagonal matrix, and its $l$th diagonal term, being ${\hat D}_l(\mathbf x)=  \hat\sigma^2_l \hat C_l(\mathbf x,\, \mathbf x) + \hat \sigma^2_0 -  \bm {\hat \Sigma}^T_l(\mathbf x) \left({\bm {\hat \Sigma}_l+\hat \sigma^2_0 \mathbf I_n }\right)^{-1} \bm {\hat \Sigma}_l(\mathbf x)$,  for $l=1,...,d$.
   
  The covariance of the observations of model (\ref{equ:LMC}) follows $\sum^d_{l=1} \bm \Sigma_l\otimes \mathbf a_l\mathbf a^T_l$ where the $(i,\, j)$th term of $\bm \Sigma_l$ is $K_l(\mathbf x_i, \mathbf x_j)$ for $1\leq i,j\leq n$ and $l=1,2...,d$. Compared with separable GP emulator, this covariance matrix in LMC is not separable, representing a more flexible class of models. As we  will see in Section \ref{subsec:applications_vector}, the estimator of electron density in \cite{brockherde2017bypassing} can be written as the predictive mean in (\ref{equ:hat_mu}).
 




\subsection{Applications in emulating electron densities}
\label{subsec:applications_vector}

In  \cite{brockherde2017bypassing}, the  KRR and Fourier basis are used for emulating electron densities, based on the locations of atoms. Suppose the molecule has $N$ atoms.
The descriptor of this approach is chosen to be a function of Gaussian potential at a spatial coordinate $\bm r$:
\begin{equation}
v(\bm r)=\sum^{N}_{j=1} Z_{j} \exp\left(-\frac{|| \bm r-\bm r^{atom}_{j}||^2}{2\gamma^2} \right), 
\label{equ:v_potential}
\end{equation}
where $Z_j$ and $\mathbf r^{atom}_j$ are the nuclear charge and spatial location of the  $j$th atom, respectively; $\gamma$ is a fixed parameter. 


{Consider} $n$ observed electron densities denoted a{s} $\mathrm{\mathbf P}=[\bm \rho(\mathbf x_1),...,\bm \rho(\mathbf x_n)]$, a $k\times n$ matrix at atom{ic} configuration $[\mathbf x_1,...,\mathbf x_n]$, with $\mathbf x_i=[\mathbf r^{atom}_{i1},...,\mathbf r^{atom}_{iN}]$, where $\mathbf r^{atom}_{ij}$ is the $j$th atom{ic} position at the $i$th simulated run  for $i=1,...,n$ and $j=1,...,N$, respectively.   Denote the external potential for the electron densities at the $i$th simulated run, $i=1,...,n$, by
$\mathbf v_i=[v_i(\mathbf r_1),...,v_i(\mathbf r_k)]^T$ at locations $\{\mathbf r_1,...,\mathbf r_k\}$. Further denote the electron density of interest ${\bm \rho}(\mathbf v)=(  \rho(v(\mathbf r_1)),...,\rho(v(\mathbf r_k)))^T$ of any potential energy $\mathbf v=( v( \mathbf r_1),..., v( \mathbf r_k) )^T$ with $v(\cdot)$ following equation (\ref{equ:v_potential}). In \cite{brockherde2017bypassing},  the estimator of the electron density at any external potential $\mathbf v$ can be written as 
\begin{equation}
  \bm{\hat \rho}(\mathbf v)= \mathbf  A_r   \hat{\mathbf z}(\mathbf v),
  \label{equ:pred_krr_density}
  \end{equation}
where  $\mathbf A_{r}=[\mathbf a_1,..., \mathbf a_d]$  is a $k\times d$ basis functions over spatial coordinates  with  $\bm a_l=(a_l(\mathbf v_1),..., a_l(\mathbf v_k))^T$  for $l=1,...,d$, and the  $l$th term of $\hat{\mathbf z}(\mathbf v)=(\hat{\mathbf z}_1(\mathbf v),...,\hat{\mathbf z}_d(\mathbf v))^T$ follows
\[\hat{\mathbf z}_l(\mathbf v)=\mathbf C_{l}(\mathbf v)(\mathbf C_{l} + \lambda_{l} \mathbf I_n )^{-1} \mathbf z^{(l)},  \] 
with $\mathbf C_{l}(\mathbf v)=(C_l(\mathbf v, \mathbf v_1),..., C_l(\mathbf v, \mathbf v_n))^T$,  the $(i,j)$ term of $\mathbf C_{l}$ being $C_l(\mathbf v_i, \mathbf v_j)$ for $1\leq i,j\leq n$, $\lambda_{l}$ being a tuning parameter  and the $i$th entry of $\mathbf z^{(l)}  = ( z^{(l)}_1,...,  z^{(l)}_n)^T$ being $z^{(l)}_i=\mathbf a^T_l  \bm \rho(\mathbf x_i) $ for $l=1,...,d$ and $i=1,...,n$. 
	In \cite{brockherde2017bypassing}, the orthogonal Fourier basis (i.e. $\mathbf A^T_r\mathbf A_r=\mathbf I_d${\color {blue})} is used to parameterize the factor loading matrix and the isotropic Gaussian kernel is used to parameterize the covariance between any two electron densities with input being the external potential function. 


\begin{remark}
Suppose for any external potential $\mathbf v$, we model the electron density  by
\[\bm \rho(\mathbf v)=\mathbf A_r \mathbf z(\mathbf v), \]
where $\bm \rho(\mathbf v)$ is {the} $k\times 1$ vector of the density, and the d-dimensional factor processes $\mathbf z(\cdot)=[z_1(\cdot),...,z_d(\cdot)]^T$ are modeled as $z_l(\cdot) \sim \mbox{GP}(0,\tilde K_l(\cdot,\cdot) )$ with $\tilde K_l(\mathbf v_i,\mathbf v_j))=\sigma^2_l C_l(\mathbf v_i,\mathbf v_j))+ \sigma^2_{0l}\mathbf 1_{\mathbf v_i=\mathbf v_j}$ for $l=1,2,...,d$. The factor loading matrix $\mathbf A_r$ is  a $k\times d$ matrix with $\mathbf A^T_r\mathbf A_r=\mathbf I_d$. Conditional on the kernel and variance parameters, the estimator of the electron density in (\ref{equ:pred_krr_density}) is  equivalent to the predictive mean estimator in Equation (\ref{equ:hat_mu}) with input being the external potential. The uncertainty (e.g. $95\%$ predictive interval) can be also be obtained  by Equation (\ref{equ:pred_gppca}). 
\end{remark}

\begin{figure}[t]
\centering
\includegraphics[height=.33\textwidth,width=1\textwidth ]{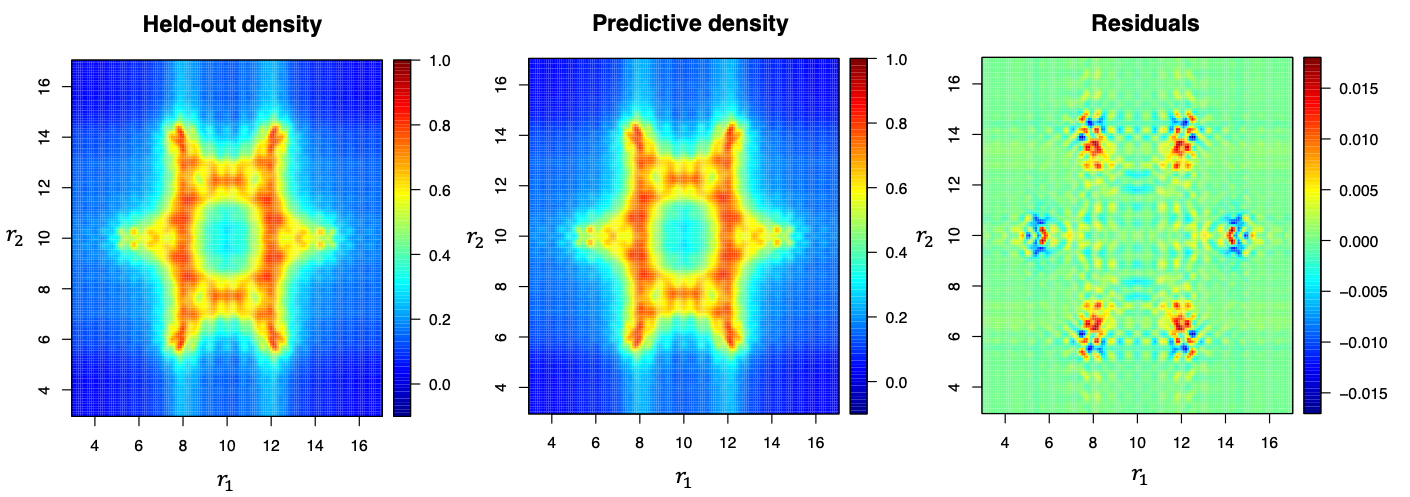}
\caption{Emulation of density at $r_3=10$ for one set of held-out density of Benzene. The held-out density, predictive density and residuals are given in the left, middle and right panel, respectively. The color scale in the right panel is  smaller than the previous two panels. }
\label{fig:emulation_density}
\end{figure}

We show the prediction of the electron density for  one set of held-output test data set in \cite{brockherde2017bypassing} based on the PP GP emulator implemented in {\sf RobustGaSP} {\tt R} \cite{gu2018robustgasp}. Only 150 electron densities are used to train our PP GP emulator, and it achieves relatively high accuracy. The  method based on the Fourier basis has similar predictive accuracy as the PP GP emulator and it is thus not shown here.






\section{Prospects}
\label{sec:Pro}
Physics-based modeling and data science are complementary but progressed almost in parallel until very recently. On the one hand, quantum and statistical mechanics calculations are able to predict the properties of virtually any matter in the universe from first principles. Whereas the theoretical foundation has been in place for nearly a century, one of the greatest dilemmas in modern science and engineering is that, owing to the seemingly insurmountable computational cost, exact equations are hardly applicable to complex systems of practical interest. On the other hand, statistical and machine learning techniques have gained a lot of attentions in recents rending new prospects for solving a wide variety of physics-based models with a tradeoff of numerical accuracy to computational cost. In conjunction with recent progress in computer technology, in particular with novel architectures such as graphical processing units (GPUs), the statistical and machine-learning algorithms will potentially overcome the major hurdles of first principles calculations.
  
A number of theoretical and simulation methods can be used for physics-based modeling. Among them, quantum Monte Carlo simulation (QMC) and the density functional theory (DFT) represent two generic theoretical frameworks that one may take to achieve accuracy and computational efficiency. In combination with machine-learning methods, these methods will potentially have transformative impacts on technological advances including the computational design of innovative devices and materials.

\begin{acknowledgement}
J.W. acknowledges financial support by the U.S. National Science Foundation’s Harnessing the Data Revolution (HDR) Big Ideas Program under Grant No. NSF 1940118.
\end{acknowledgement}

\bibliographystyle{spbasic}

\bibliography{References_2020}
\clearpage

\printauthorindex

\printindex

\end{document}